\newcommand{\flow}[0]{{\sf flow}}
\newcommand{\labeling}{{\sf lbl}}
\newcommand{\rel}{{\sf R}}
\newcommand{\playerone}{\ensuremath{\mathsf{Pl_1}}}
\newcommand{\playertwo}{\ensuremath{\mathsf{Pl_2}}}
\newcommand{\player}[1]{\ensuremath{\mathsf{Pl_{#1}}}}
\newcommand{\hist}[1]{\ensuremath{\mathsf{Hist_{#1}}}}
\newcommand{\Rset}{\mathbb{R}}
\newcommand{\obs}{{\sf Obs}}
\newcommand{\Constr}{{\sf Cnstr}}
\newcommand{\post}{{\sf rst}}
\newcommand{\game}[0]{\mathcal{G}}
\newcommand{\mvs}[1]{{\sf Mvs(#1)}}
\newcommand{\initSingame}{ISR~game }
\newcommand{\act}{{\sf Act}}
\newcommand{\var}{{\sf X}}
\newcommand{\edge}{{\sf E}}
\newcommand{\runs}{{\sf Runs}}
\newcommand{\tr}{{\sf tr}}
\newcommand{\outcome}{{\sf Out}}
\begin{document}
\title{Deciding the synthesis problem for hybrid games through bisimulation}
%
%
\author{Catalin Dima\inst{1}\and
Mariem Hammami\inst{1}\and
Youssouf Oualhadj\inst{1,2}\and
Régine Laleau\inst{1}}
\authorrunning{C. Dima, M. Hammami, Y. Oualhadj, R. Laleau}
%
\institute{LACL, Université Paris-Est Créteil, F-94010 Créteil, France \and CNRS, ReLaX, IRL 2000, Siruseri, India\\
\email{firstname.lastname@u-pec.fr}}

\maketitle              
\begin{abstract}
Hybrid games are games played on a finite graph endowed with real variables which may model behaviors of discrete controllers of continuous systems.  
The synthesis problem for hybrid games is decidable for classical objectives (like LTL formulas)
when the games are initialized singular, meaning that the slopes of the continuous variables are piecewise constant and variables are reset whenever their slope changes. 
The known proof adapts the region construction from timed games. 

In this paper we show that initialized singular games can be reduced, via a sequence of alternating bisimulations, to timed games, generalizing the known reductions by bisimulation from 
initialized singular automata to timed automata.
Alternating bisimulation is the generalization of bisimulation to games, accomodating a strategy translation lemma by which, 
when two games are bisimilar and carry the same observations, each strategy in one of the games can be translated to a strategy in the second game 
such that all the outcomes of the second strategy satisfies the same property that are satisfied by the first strategy. 
The advantage of the proposed approach is that one may then use realizability tools for timed games to synthesize a winning strategy for a given objective, and then use the strategy translation lemma 
to obtain a winning strategy in the hybrid game for the same objective. 
\keywords{Controller synthesis \and Alternating bisimulation \and Hybrid games.}
\end{abstract}

\nocite{AlurD94,FaellaTM02,BehrmannCDFLL07}

\begin{credits}
\subsubsection{\ackname}
This work was funded by grant ANR-17-CE25-0005 
the \href{https://discont.loria.fr}{DISCONT} Project
 from the \emph{Agence Nationale
de la Recherche (ANR)}.

The first and third authors are supported by grant ANR-20-CE25-0012
the \href{https://www.irif.fr/users/maveriq}{MAVeriQ} Project
 from the \emph{Agence Nationale
de la Recherche (ANR)}.
\end{credits}
\section{Introduction}
\label{sec:typesetting-summary}
In order to describe cyber-physical systems in which discrete and continuous physical 
processes interact with each other, many mathematical modelling techniques have evolved as 
a tool. The most common model known so far is the one of hybrid automata~\cite{Henzinger2000}. 

\textbf{Controller synthesis and game theory} A reactive system interacts with its environment, when this system displays 
time-sensitive behaviors, it is described by hybrid games~\cite{HenzingerHM99}.
In practice, some behaviors depend on external 
factors like weather or temperature. These behaviors are by definition uncontrollable, hence for the system behaves as desired,  one needs to find ways to limit the effect of these behaviors. 
We model this situation through the game theoretic metaphor.
We suppose a game played between two players; the first player $\playerone$ models the system and is supposed to be controllable, and the second player $\playertwo$  models the environment and is supposed to be antagonistic.

The specification to ensure is modeled as an objective (\textit{i.e., } a subset of the possible executions of the system) that $\playerone$ has to enforce against any behavior of $\playertwo$. This behavior is formalized by the notion of a strategy. Therefore, the \emph{control policy} that we aim to implement is any strategy that ensures the desired specification.

The control problem for hybrid games consists in answering the question of whether there exists a  strategy of the controller that ensures a given specification. This problem is undecidable for  hybrid games, but decidable 
for initialized singular games \cite{HenzingerHM99}, which are games in which the first derivative of each continuous variable is piecewise constant, and whenever this derivative changes, the variable must be reset to some rational value. 
In this paper we investigate the existence of decidable subclasses for this problem.
The original proof from \cite{HenzingerHM99} shows that the region construction for timed automata \cite{AlurD94} can be adapted for initialized singular games. 
Implementing this result requires translating the whole machinery for deciding timed games to the case of initialized singular games, namely the notion of zones which generalize regions, and the controlled predecessor operators on zones. 
In this article, we show that initialized singular games can be reduced to timed games \cite{FaellaTM02}. 
Such a reduction would allow a more direct application of existing tools for solving timed games like UppAal TIGA \cite{BehrmannCDFLL07}. 
The reduction is obtained using the notion of \emph{alternating simulation}, which, roughly speaking, is a simulation relation that preserves winning strategies,
initially defined for discrete game structures \cite{alur-alternating-refinement} and adapted to concurrent hybrid games in \cite{HENZINGER199894}. 

\textbf{Contribution} 
Inspired by the results of \cite{HENZINGER199894} over rectangular hybrid automata, 
we build, from each initialized singular game, a timed game which is alternating bisimilar with the original game. 
By generalizing \cite{HENZINGER199894}, the construction passes through intermediate steps in which 
the original hybrid game is transformed into a stopwatch game, then into an updatable timed game, and finally into a timed game. 
We also show that each construction comes in pair with an \emph{alternating bisimulation} proving that the resulting game is bisimilar with the previous one. 
We note that, in some cases, the bisimulation is not a bijection.

\textbf{Organization} We start by giving the necessary notions and proper definitions concerning hybrid games and turn based games. 
We then proceed to the construction of a timed turn based game from a given initialized singular turn based game with a bisimulation relation between the two games. 
This construction is achieved, as mentioned above, through three successive transformations, where in the intermediary steps we build stopwatch games, resp. updatable timed games. 
Each transformation is accompanied by a relation between the sets of configurations of the two games, which is shown to satisfy the properties of alternating bisimulations. 
We end with a short section with conclusions and future work. 

\section{Turn-based Hybrid Games}

Given a set of variables $\var$, the set of simple compact constraints over $\var$ is defined by the following grammar: 
\[
\varphi := x \in I \mid \varphi \wedge \varphi 
\]
where $I$ is a compact interval with rational bounds and $x$ is a variable in $\var$.
For a constraint $\varphi$ which contains the conjunct $x \in [a,b]$, $\varphi(x)$ denotes the interval $[a,b]$. We consider that each constraint contains a single conjunct of the form $x \in I$ for each variable $x$.

We extend the framework of Initialized Singular Automata from \cite{HENZINGER199894} to the game setting. Informally, an initialized singular game is played between two players,
called \playerone{} and \playertwo.
The set of locations is partitioned into two disjoint sets, the first one is controlled by \playerone, and the second by \playertwo. A play is obtained by the following interaction, the players take turns by proposing moves as follows: when a player is in a location that they control.
This player proposes an action and a timed delay, hence a new configuration is obtained. The game proceeds from this fresh configuration. By repeating this interaction forever, the two players produce an infinite run for which one of the player wants to satisfy some property (usually called the \emph{objective} of the game).

\begin{definition}[Initialized Singular Game]
An initialized singular game (\initSingame for short) denoted $\game_{S}$ of dimension $n$ between two players $\playerone$ and $\playertwo$ is a tuple
$\game_{S} = (L_1, L_2, l_0, \var, \act, \obs, \flow, \edge, \labeling)$ such that: 
\begin{itemize}
  \item $L_1$ is the finite set of locations that belong to $\playerone$ and $L_2$ is the finite set of locations belonging to $\playertwo$, with $L_1\cap L_2 =\emptyset$. 
  \item $l_0 \in L_1$ is the initial location, assumed to belong to $\playerone$.    
  \item $\var$ is a set of $n$ real variables.
  \item $\act$ is a finite set of actions.
  \item $\obs$ is a finite set of observations. 
  \item $\labeling: (L_1 \cup L_2) \to \obs$ is the labeling function that labels each location with an observation. 
  \item $\flow: (L_1 \cup L_2) \times \var \to \mathbb Q$ is the value of the derivative which constrains the evolution of each variable in each location.
  \item $\edge \subseteq \left(L_1 \cup L_2\right) \times \act \times \Constr(\var) \times \post(\var)\times \left(L_1 \cup L_2\right) 
  $ 
  where $\Constr(\var)$ is the set of all simple compact constraints over $\var$ and $\post(\var) : \var \to \mathbb Q \cup \{ \bot\}$ is the set of functions that we call \emph{reset function}.
  We will denote an edge $e \in \edge$ as a tuple
  $e=(l,a,\varphi_e, \post_e,l')$ such that:
  \begin{itemize}
  \item $\varphi_e \in \Constr(\var)$, in order to take the edge $e$ the value of the variables $\var$ must satisfy the constraint $\varphi_e$. 
  \item $\post_e$ is a reset function of the variables when the edge $e$ is taken, $\post_e: \var \to \mathbb Q \cup \{ \bot\}$, $\post_e(x) \in \mathbb{Q}$ means that $x$ is reset to a new value and when $\post_e(x) = \bot$ it means that $x$ is not reset for any $x\in \var$. By abuse of notation, for a variable valuation $v : \var \rightarrow \Rset$, we also denote $\post_e(v)$ the valuation defined by 
  \[
\post_e(v)(x) =
  \begin{cases}
      v(x) & \text{ when } \post_e(x)  = \bot\\
      \post_e(x) & \text{ otherwise} 
  \end{cases}
  \]
  \item when $\flow(l,x) \neq  \flow(l',x)$ then $\post_e(x) \neq \bot$. In other words, whenever a variable $x \in \var$ changes its dynamics then its value is 
  reset to $\post_e(x)$.
  \end{itemize}
\end{itemize}
\end{definition}
The semantics of an \initSingame is defined as a transition system 
$$T(\game_S) = (Q(\game_S), \mvs{\game_S}, q_0,\delta_S)$$
where $Q(\game_S)$ is the set of \emph{configurations}, i.e., all the couples
$(l,v) \in (L_1 \cup L_2) \times \mathbb{R}^{n}$ with $q_0  = (l_0,\Vec{0})$, 
$\mvs{\game_S} = \act \times \Rset_+$ is the set of moves, and $\delta_S$ consists of transitions of the form $(l,v) \xrightarrow{(a,t)} (l',v')$
for which there exists an edge $e=(l,a, \varphi_e,\post_e,l')$ such that, for each variable $x \in \var$:
\begin{itemize}
    \item $v(x) + t \cdot \flow(l,x) \in  \varphi_e(x) $. 
    We will also use the vectorial notation $v'= v+t\cdot \flow(t,\cdot)$
    where $\flow(l,\cdot)$ denotes the vector of variable flows.
    \item $v'(x) = \post_e(x)$ whenever $\post_e(x) \neq \bot$. 
    \item $v'(x) = v(x) +t \cdot \flow(l,x)$ whenever $\post_e(x) = \bot$.    
\end{itemize}
We denote $Q_1$ the set of configurations whose location belongs to $\playerone$, i.e. $Q_1(\game_S) = \{(l,v) \in Q(\game_S) \mid l \in L_1 \}$ and, similarly, $Q_2(\game_S)= \{(l,v) \in Q(\game_S) \mid l \in L_2 \}$. 
We extend the function $\labeling$ over the set of all the configurations $Q_1 \cup Q_2$ as expected.
A $\emph{run}$ $\rho$ in an $\initSingame$ is a finite or infinite alternating sequence of configurations and moves, 
$\rho = q_0 \xrightarrow{(a_1,t_1)} q_1 \xrightarrow{(a_2,t_2)} q_2 \ldots$.
The set of runs in $T(\game_S)$ is denoted $\runs(\game_S)$. A history in $T(\game_S)$ is any prefix of a run in $\runs(\game_S)$, we denote by $\hist{i}(\game_S)$ the set of histories ending in a configuration of $\player{i}$.
A \emph{trace} is a finite or infinite sequence of observations, and the trace induced by a run $\rho$ is simply $\tr(\rho) = \labeling(q_0)\cdot \labeling(q_1) \cdot \labeling(q_2)\ldots$
\begin{definition}[Strategy on $T(\game_S)$]
Let $i=1,2$. A strategy $\sigma_i$ of $\player{i}$ is a function
that maps finite sequences of configurations ending in a configuration of $\player{i}$ to a move (an action and a time delay). 
\end{definition}
Given two strategies $\sigma_1$ and $\sigma_2$, their \emph{outcome} $r = q_0\xrightarrow{m_1} q_1 \xrightarrow{m_2} \ldots \in \runs(\game_S)$ is obtained as follows:
\begin{align*}
    q_1 &= \delta_S(q_0,\sigma_1(q_0)) \text{ and } m_1 = \sigma_1(q_0) \\
    q_{i+1} &= 
    \begin{cases}
        \delta_S(q_i,\sigma_1(q_0\ldots q_i)) & \text{ if } q_i \in Q_1(\game_S)
        \text{ and } m_i = \sigma_1(q_0\ldots q_i) \\
        \delta_S(q_i,\sigma_2(q_0\ldots q_i)) & \text{ if } q_i \in Q_2(\game_S) 
        \text{ and } m_i = \sigma_2(q_0\ldots q_i) 
    \end{cases}
\end{align*}
We denote $\outcome(\sigma_1)$ the set of all runs that are induced by the strategy 
$\sigma_1$ of $\playerone$ and some strategy of $\playertwo$.

The study of games involves the construction of strategies whose set of outcomes satisfies some desired property. 

Semantically, a \emph{property} is a subset of the set of infinite traces. We then say that a strategy $\sigma_1$ for \playerone{} satisfies a property $P$ if the set of infinite traces which correspond to outcomes of $\sigma_1$ is included in $P$.

It is known that the problem of synthesizing strategies for properties defined by Linear Temporal Logic formulas is decidable for the case of
initialized singular games, as proved in \cite{HenzingerHM99} (Theorem 3.1) where an adaptation of the region construction is used.
In the rest of the paper, we focus on an alternative approach, namely reducing the synthesis problem for ISR-games to the synthesis problem for 
timed games by means of successive \emph{alternating bisimulation} reductions. We start by recalling the notion of alternating bisimulation from \cite{HenzingerHM99}.

\subsection{Alternating simulation for turn-based hybrid games}
In this section we adapt the notion of alternating simulation for the case of (turn-based) ISR-games
and a semantic version of the strategy translation property from \cite{alur-alternating-refinement,HenzingerHM99}.

\begin{definition}[Alternating Simulation on Initialized Compact singular Game]\label{def-bisim}
Let $\game^1_S$ and $\game^2_S$ be two initialized singular games with the same set of observations $\obs$, and let
$T(\game^1_S) = (Q(\game^1_S), \mvs{\game^1_S}, q^1_{0},\delta^1_S)$ and
$T(\game^2_S) = (Q(\game^2_S), \mvs{\game^2_S}, q^2_0,\delta^2_S)$ 
be respectively the induced transition systems.\\
Let ${\sf R}\subseteq (Q_1(\game^1_S) \times Q_1(\game^2_S)  ) \cup (Q_2(\game^1_S) \times Q_2(\game^2_S) )$ be a binary relation.
The relation ${\sf R}$ is called a \textbf{simulation} if, 
for any two configurations $p=(l,v) \in Q(\game^1_S) $ and 
$q=(s,y) \in Q(\game^2_S)$, 
if $(p, q) \in {\sf R}$ then:
\begin{enumerate}
\item $\labeling^1(p)=\labeling^2(q)$
\item If $p\in Q_1(\game^1_S)$ and $q \in Q_1(\game^2_S)$, then for all $ (m,t) \in \mvs{\game^1_S} $ with $p'=(l',v')= \delta^1_S((l,v),(m,t))$ there exists $(m',t') \in \mvs{\game^2_S} $ and $q'=(s',y')=\delta^2_S((s,y),(m',t'))$ such that $(p', q') \in \rel$.
\item If $p\in Q_2(\game^1_S) $ and $q \in Q_2(\game^2_S) $, then for all $ (m',t') \in \mvs{\game^2_S}$ and $q'=(s',y')=\delta^2_S(q,(m',t'))$ there 
exists $(m,t) \in \mvs{\game^1_S}$ and $p'=(l',v')= \delta^1_S(p,(m,t))$ such that $ (p,' q') \in \rel$.
\end{enumerate}
\end{definition}
We say that $T(\game^2_S)$ simulates $T(\game^1_S)$ and we denote 
$T(\game^1_S) \preceq_s T(\game^2_S)$ if there exists a simulation relation $\rel \subseteq (Q_1(\game^1_S) \times Q_1(\game^2_S)  )\cup (Q_2(\game^1_S) \times Q_2(\game^2_S) )$
which satisfies all of the above conditions and contains the pair of initial configurations, i.e. 
$(q^1_{0}, q^2_0) \in \rel$. By extension we also denote sometimes $\game^1_S \preceq_s \game^2_S$

Moreover, $T(\game^1_S)$ and $T(\game^2_S)$ are bisimilar if there exists simulation relation $\rel$ witnessing $T(\game^1_S) \preceq_s T(\game^2_S)$ and such that $\rel^{-1}$ witnesses $T(\game^2_S) \preceq_s T(\game^1_S)$.
\begin{lemma} [Simulation Composition] \label{compo}
Let $\game^1$, $\game^2$ and $\game^3$ be three turn based game structures 
and their respective semantics $T(\game^1), T(\game^2), T(\game^3)$.
Assume that $\alpha$ is a simulation between $T(\game^1)$ and $ T(\game^2)$, $\beta$ a simulation between $T(\game^2)$ and $ T(\game^3)$ hence $\gamma = \alpha \circ \beta$ is a simulation between $T(\game^1) $ and $ T(\game^3)$.
\end{lemma}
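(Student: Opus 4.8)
The plan is to unfold the definition of relational composition and verify directly that $\gamma = \alpha \circ \beta$ satisfies the three defining clauses of a simulation from Definition~\ref{def-bisim}. Recall that $(p,r) \in \gamma$ means there exists an intermediate configuration $q \in Q(\game^2)$ with $(p,q)\in\alpha$ and $(q,r)\in\beta$. First I would observe that, since both $\alpha$ and $\beta$ relate only configurations owned by the same player (by the typing $\rel \subseteq (Q_1 \times Q_1) \cup (Q_2 \times Q_2)$), and since $q$ is shared, the three configurations $p$, $q$, $r$ all lie in the same player's part; hence $\gamma$ has the correct type. The labeling clause is then immediate by transitivity of equality: $\labeling^1(p)=\labeling^2(q)$ from $\alpha$ and $\labeling^2(q)=\labeling^3(r)$ from $\beta$, so $\labeling^1(p)=\labeling^3(r)$.

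The heart of the argument is a two-step diagram chase for the move-matching clauses, and the key subtlety is that the direction of the chase flips with the owner of the configuration. For a \playerone{} configuration (clause 2) the challenge is issued in the first component and answered in the second, so I would chase \emph{forward}: given a move $p \xrightarrow{(m,t)} p'$ in $\game^1$, apply the simulation property of $\alpha$ to obtain $q \xrightarrow{(m',t')} q'$ in $\game^2$ with $(p',q')\in\alpha$; then feed this very move as the challenge to the simulation property of $\beta$ to obtain $r \xrightarrow{(m'',t'')} r'$ in $\game^3$ with $(q',r')\in\beta$. The witness $q'$ certifies $(p',r')\in\gamma$. For a \playertwo{} configuration (clause 3) the quantifier alternation is reversed, so I would chase \emph{backward}: given a move $r \xrightarrow{(m'',t'')} r'$ in $\game^3$, first apply $\beta$ to pull it back to a move $q \xrightarrow{(m',t')} q'$ in $\game^2$ with $(q',r')\in\beta$, then apply $\alpha$ to pull that move back to $p \xrightarrow{(m,t)} p'$ in $\game^1$ with $(p',q')\in\alpha$, again yielding $(p',r')\in\gamma$ through the witness $q'$.

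The only thing to watch is the bookkeeping: one must ensure that the intermediate move produced in $\game^2$ is reused consistently as the challenge (resp. as the matched move) for the second relation, and that at each step the two configurations being related are indeed of the same owner, so the appropriate clause of Definition~\ref{def-bisim} applies. No genuine obstacle arises — the lemma is essentially a commutativity check — but the flip of chase direction between the \playerone{} and \playertwo{} cases is where an error would most easily creep in, so I would spell out both cases explicitly rather than appeal to symmetry. Finally, to upgrade the statement to $\game^1 \preceq_s \game^3$ it suffices to note that $(q^1_0,q^2_0)\in\alpha$ and $(q^2_0,q^3_0)\in\beta$ give $(q^1_0,q^3_0)\in\gamma$, so $\gamma$ also relates the initial configurations.
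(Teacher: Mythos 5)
Your proof is correct. The paper itself offers no argument here---it explicitly leaves the lemma as ``a simple exercice''---and what you have written is precisely the intended straightforward proof: composition via an intermediate witness $q$, transitivity of label equality, and the two diagram chases with the forward/backward direction correctly flipped between the \playerone{} clause (challenge in $\game^1$, answered through $\game^2$ into $\game^3$) and the \playertwo{} clause (challenge in $\game^3$, pulled back through $\game^2$ into $\game^1$), plus the remark on initial configurations needed to conclude $T(\game^1) \preceq_s T(\game^3)$.
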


The proof of the above lemma is straightforward and is left as a simple exercice.

\begin{lemma}\label{lemme-transfert-strategies}
If $T(\game^1_S) \preceq_s T(\game^2_S)$ then for all $\sigma_1$ strategy of $\playerone$ in $T(\game_S^1)$, there exists $\sigma_1'$ a strategy of $\playerone$ in $T(\game^2_S)$ such that 
\[
\{ \tr(\rho') \mid \rho' \in \outcome({\sigma_1'}) \}  \subseteq
\{ \tr(\rho) \mid \rho \in \outcome({\sigma_1}) \}
\]
\end{lemma}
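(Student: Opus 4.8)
The plan is to translate $\sigma_1$ across the simulation by a \emph{shadowing} argument: I will define $\sigma_1'$ together with a map $\pi$ that assigns, to every configuration history $h'$ of $T(\game^2_S)$ consistent with $\sigma_1'$, a ``shadow'' history $\pi(h')=h$ of $T(\game^1_S)$ consistent with $\sigma_1$, maintained so that $h$ and $h'$ have equal length, identical traces, and related last configurations. Because $\rel$ relates only configurations of the same player, the turn of \playerone\ (resp. \playertwo) in $T(\game^2_S)$ always coincides with the turn of the shadow in $T(\game^1_S)$, which is what makes the two simulation clauses apply in the right places.

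First I would resolve the existential quantifiers of Definition \ref{def-bisim} by fixing two choice functions. For a related pair $(p,q)\in\rel$ with $p\in Q_1(\game^1_S)$ and a move $(m,t)$ of game~1, condition~(2) lets me pick a game~2 move $c_1(p,q,(m,t))=(m',t')$ whose successor $\delta^2_S(q,(m',t'))$ is related to $\delta^1_S(p,(m,t))$. Symmetrically, for $(p,q)\in\rel$ with $p\in Q_2(\game^1_S)$ and a successor $q'$ reachable from $q$ by some \playertwo-move, condition~(3) lets me pick a game~1 move $c_2(p,q,q')$ whose successor is related to $q'$; indexing this choice by the resulting configuration $q'$ (rather than by the game~2 move) is what keeps $\pi$, and hence $\sigma_1'$, a function of configuration sequences only.

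Then I would define $\sigma_1'$ and $\pi$ simultaneously by induction on history length, starting from $\pi(q^2_0)=q^1_0$, which are related by hypothesis. Given $h'$ ending in $q'_k$ with $\pi(h')=h$ ending in $p_k$ and $(p_k,q'_k)\in\rel$: if $q'_k\in Q_1(\game^2_S)$, I read $(m,t)=\sigma_1(h)$, set $\sigma_1'(h')=c_1(p_k,q'_k,(m,t))$, and let $\pi$ send the unique consistent extension of $h'$ to $h\cdot\delta^1_S(p_k,(m,t))$; if $q'_k\in Q_2(\game^2_S)$, then for each \playertwo-successor $q'_{k+1}$ of $q'_k$ I let $\pi$ send $h'\cdot q'_{k+1}$ to $h\cdot\delta^1_S(p_k,c_2(p_k,q'_k,q'_{k+1}))$. (On histories not consistent with $\sigma_1'$, I set $\sigma_1'$ arbitrarily.) By construction the last configurations stay related, condition~(1) keeps the labels equal step by step so $\tr(h)=\tr(h')$, and $h$ is consistent with $\sigma_1$ since the \playerone-moves along it are exactly those dictated by $\sigma_1$ and the \playertwo-moves are legal.

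Finally I would pass to the limit. For any $\rho'\in\outcome(\sigma_1')$, all its finite prefixes are $\sigma_1'$-consistent, so prefix-monotonicity of $\pi$ yields an increasing chain of $\sigma_1$-consistent shadow histories whose union is a run $\rho\in\runs(\game^1_S)$; taking the \playertwo-strategy that plays the moves $c_2(\cdots)$ selected along $\pi(\rho')$ (and arbitrarily elsewhere) shows $\rho\in\outcome(\sigma_1)$, while the step-by-step equality of labels gives $\tr(\rho)=\tr(\rho')$, establishing the inclusion. I expect the main obstacle to be precisely the bookkeeping in the construction: the translation is genuinely history-dependent, so one cannot read $\sigma_1'(h')$ off the current configuration alone but must carry the whole shadow $\pi(h')$, and one must be careful that the \playertwo-branch of $\pi$ depends only on the configurations visible in $h'$ so that $\sigma_1'$ is a legitimate strategy.
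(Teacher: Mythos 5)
Your proof is correct, but note that there is no in-paper argument to compare it against: the paper states Lemma~\ref{lemme-transfert-strategies} without proof, presenting it as a semantic adaptation of the strategy-translation property of alternating simulations from \cite{alur-alternating-refinement,HenzingerHM99}. What you wrote supplies that missing proof, and it is the standard shadowing argument. Its two delicate points are exactly the ones you flag. First, because $\rel \subseteq (Q_1(\game^1_S)\times Q_1(\game^2_S)) \cup (Q_2(\game^1_S)\times Q_2(\game^2_S))$, the shadow history is always at the same player's turn as the real one, so clause~(2) of Definition~\ref{def-bisim} is available precisely at \playerone's nodes and clause~(3) precisely at \playertwo's nodes. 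Second, because strategies in this paper consume sequences of \emph{configurations} only (moves are not recorded in histories), your decision to index the \playertwo{}-choice function $c_2$ by the successor configuration $q'$ rather than by the move $(m',t')$ that produced it is essential: without it, two distinct moves of \playertwo{} leading to the same configuration would force the shadow map $\pi$, and hence $\sigma_1'$, to be multivalued. One caveat, inherited from the paper's conventions rather than a flaw in your reasoning: $\delta_S$ is a partial function, and the paper never specifies what outcomes are when a prescribed move is illegal or a configuration is blocked. If outcomes may be finite maximal runs, your limit step yields a shadow run that is $\sigma_1$-consistent but need not be maximal in $T(\game^1_S)$, so the trace inclusion as stated implicitly relies on a non-blocking assumption under which all outcomes are infinite; under that reading (which the paper's notion of properties as sets of infinite traces suggests), your argument is complete.
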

The relevance of this lemma is the following: if we may prove that a game $\game$ is 
bisimilar with a simpler game $\game'$ and are able to build a strategy $\sigma'$ for \playerone in $\game'$ which 
satisfies some property, then Lemma \ref{lemme-transfert-strategies} allows us to translate $\sigma'$ into a strategy $\sigma$ which 
satisfies the same property. Therefore, if $\game'$ lies in a decidable class of games $\Gamma$, we may use the decision algorithm for $\Gamma$ to 
synthesize $\sigma'$ and, hence, provide a methodology for attacking the strategy synthesis problem in the class of games where $\game$ lies. 

The plan for the rest of the paper is then to show that ISR games are alternating bisimilar with simpler games in which 
the strategy synthesis problem is decidable.

\section{From Initialized Singular Games to Stopwatch Game}
We extend the construction from \cite{HENZINGER199894} used for automata to the more general case of turn-based games. The first step is 
to reduce, through an alternating bisimulation, each initialized singular game to a 
\emph{stopwatch game}. 
\begin{definition}[Initialized Stopwatch Game]
An initialized stopwatch game 
of $n$-dimension is an initialized singular game in which the flow is either $1$ or $0$, that is, for any location $l$ and variable $x$, $flow(l,x) \in \{0,1\}$.
\end{definition}

\subsection{Transformation}
\label{trans:singToStop}
Given an initialized singular game 
$\game_{S} = (l_0, L_1, L_2, \var, \act, \obs, \flow, \edge, \labeling)$
we construct from $\game_S$ an initialized stopwatch game $\game_W$, where we change the dynamic to $\flow(l,x)=1$ when it is not zero, 
for that we adapt the constraints and the reset functions; we divide the constraints values with the flow of the current location and the reset values with the flow of the successor location. The game is built as follows: 
$$\game_{W} = (l_0, L_1, L_2, \var, \act, \obs, \flow_W, \edge_W, \labeling)$$
\begin{enumerate}
  \item $l_0$, $L_1$, $L_2$, $\act$, $\obs$ and $\labeling$ are kept the same as $\game_S$.
  \item $\flow_W: L \times \var \to \mathbb{Q}^n$ where for all $x \in \var$, 
  \begin{align}
  \label{eq:flow}
  \flow_W(l,x)=
     \begin{cases}
      0 & \text{ if }\flow(l,x)=0\enspace,\\
      1 & \text{otherwise}\enspace.
          \end{cases}
    \end{align}
   \item $\edge_W$ is the set of edges $e_W=(l,a,\varphi_{e_W}, \post_{e_W},l')$ such that 
   \begin{enumerate}
   \item there exists $e=(l,a,\varphi_e,\post_e,l') \in \edge$.
       \item If we denote 
         \begin{align*}
 \varphi_{e_W}(x)=
     \begin{cases}
      \frac{\varphi(x)}{\flow(l,x)} & \text{ if }\flow(l,x)\neq 0\enspace,\\
      \varphi(x) & \text{otherwise}\enspace.
          \end{cases}
    \end{align*}
then the constraint $\varphi_{e_W}$ is the following conjunction: $ \varphi_{e_W}= \bigwedge_{x\in \var} \varphi_e(x) $. 
       \item $\post_{e_W}: \var \to \mathbb{Q}^n \cup \{ \bot\}$ where 
  to each $x \in \var$ and edge $e=(l,a,\varphi_e,\post_e,l')\in \edge$
        \begin{align}
        \label{eq:post}
       \post_{e_W}(x) = 
        \begin{cases}
            \bot & \text{ if }\post(x)= \bot\enspace,\\
            \frac{\post(x)}{\flow(l',x)} & \text{ if } \flow(l',x)\neq 0  \enspace,\\
            \post(x) & \text{ if } \flow(l',x) = 0 \enspace.
        \end{cases}
    \end{align}
   \end{enumerate}
\end{enumerate}
For the sequel, we denote the semantics of $\game_S$ as 
$T(\game_S) = (Q(\game_S), \mvs{\game_S}, q_0,\delta)$
and the semantics of $\game_{W}$ as 
$T(\game_W) = (Q(\game_W), \mvs{\game_W}, q_0,\delta_W)$.

For the sequel, given a configuration $(l,v)\in Q(\game_S)$, we denote $v^*$ the following variable valuation:
\begin{align}
\label{eq:vstar}
\forall~x \in \var\colon~
v^*(x) = 
\begin{cases}
   \frac{v(x)}{\flow(l,x)}  & \text{ if } \flow(l,x)\neq 0 \\
   v(x) & \text{ otherwise }
\end{cases}
\end{align}

We define the mapping
$\gamma_1: Q(\game_S) \to Q(\game_W)$ as follows: 
\[\forall q=(l,v) \in Q(\game_S),~\gamma_1 \colon (l,v) \mapsto (l,v^*)\enspace.\]
Clearly enough, $\gamma_1^{-1}$, the inverse of $\gamma_1$, exists.
Therefore, for any $(l,v^*)\in Q(\game_W)$, 
$\gamma_1^{-1} \colon (l,v^*) \mapsto (l,v)$ where 
\begin{align}
\forall~x \in \var\colon
v(x) = 
\begin{cases}
   v^*(x) \cdot {\flow(l,x)}  & \text{ if } \flow(l,x)\neq 0 \enspace,\\
   v^*(x) & \text{ otherwise}\enspace.
\end{cases}
\end{align}
By construction of $T(\game_W)$ we have $(l,v)\in Q(\game_S)$.
\subsection{Bisimulation between $T(\game_S)$ and $T(\game_W)$}
We now show that the mapping $\gamma_1$ defined above is an alternating bisimulation.
\begin{lemma}
$\gamma_1$ witnesses that $T(\game_S) \preceq_s T(\game_W)$. 
\end{lemma}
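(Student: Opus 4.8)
The plan is to verify that the relation $\rel = \{(q, \gamma_1(q)) \mid q \in Q(\game_S)\}$ satisfies the three conditions of Definition~\ref{def-bisim}, together with the initial-configuration requirement. The initial condition $(q_0, \gamma_1(q_0)) \in \rel$ is immediate because $q_0 = (l_0, \vec 0)$ and $\gamma_1(l_0, \vec 0) = (l_0, \vec 0^{\,*}) = (l_0, \vec 0)$, since dividing $0$ by any nonzero flow still gives $0$. Condition~(1), label preservation, is also immediate: $\gamma_1$ keeps the location unchanged, and $\labeling$ depends only on the location, so $\labeling(l,v) = \labeling(l) = \labeling(l,v^*)$. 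Since locations are preserved, $q \in Q_1(\game_S)$ iff $\gamma_1(q) \in Q_1(\game_W)$, which guarantees that the player-partition matches and that exactly one of conditions~(2),~(3) applies to each pair.

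The technical core is to establish a precise step-correspondence: for every move $(a,t)$ available from $(l,v)$ in $T(\game_S)$ leading to $(l',v')$, the \emph{same} action $a$ with a suitably rescaled delay $t'$ is available from $(l,v^*)$ in $T(\game_W)$ and leads exactly to $(l',(v')^*) = \gamma_1(l',v')$. First I would fix the edge $e = (l,a,\varphi_e,\post_e,l')$ witnessing the $\game_S$-transition and its corresponding edge $e_W = (l,a,\varphi_{e_W},\post_{e_W},l')$ in $\game_W$. The key computation is the guard check: in $\game_S$ we have $v(x) + t\cdot\flow(l,x) \in \varphi_e(x)$, and I must show $v^*(x) + t'\cdot\flow_W(l,x) \in \varphi_{e_W}(x)$. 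For a variable with $\flow(l,x) \neq 0$, set $t' = t$ and divide the $\game_S$-inequality through by $\flow(l,x)$: since $v^*(x) = v(x)/\flow(l,x)$, $\flow_W(l,x) = 1$, and $\varphi_{e_W}(x) = \varphi_e(x)/\flow(l,x)$ by construction, the membership is preserved. For a variable with $\flow(l,x) = 0$, we have $v^*(x) = v(x)$, $\flow_W(l,x) = 0$, and $\varphi_{e_W}(x) = \varphi_e(x)$, so the condition is identical. Crucially, a single common delay $t$ must work simultaneously for all variables; since $\flow_W(l,x) = 1$ whenever $\flow(l,x) \neq 0$, choosing $t' = t$ serves every coordinate at once, which is why the rescaling lands in the guards and the flow values absorb correctly.

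Next I would verify the reset part, namely that the resulting valuation is precisely $(v')^*$. For each $x$ I split on $\post_e(x)$. When $\post_e(x) = \bot$ the variable is not reset, so $v'(x) = v(x) + t\cdot\flow(l,x)$ and $\post_{e_W}(x) = \bot$, giving $(v')^*(x) = v^*(x) + t\cdot\flow_W(l,x)$, which matches the update computed above; here one uses the definedness clause of the ISR-game ensuring $\flow(l,x) = \flow(l',x)$ whenever $x$ is not reset, so the $(\cdot)^*$-normalisation is consistent between $l$ and $l'$. When $\post_e(x) \neq \bot$, the variable is reset to $\post_e(x)$ in $\game_S$, hence $(v')^*(x) = \post_e(x)/\flow(l',x)$ if $\flow(l',x)\neq 0$ and $\post_e(x)$ otherwise; this is exactly the value $\post_{e_W}(x)$ prescribed by Equation~\eqref{eq:post}. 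Thus $\delta_W(\gamma_1(l,v),(a,t')) = \gamma_1(l',v') \in \gamma_1(Q(\game_S))$, so the successor pair lies in $\rel$, discharging condition~(2).

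For the main obstacle, I expect condition~(3), the $\playertwo$ direction, to require the most care, because it quantifies over moves of $\game_W$ and demands a matching move back in $\game_S$ — the opposite direction of the rescaling. Fortunately the map $\gamma_1$ is a bijection with explicit inverse $\gamma_1^{-1}$, so the same computation runs in reverse: given a $\game_W$-move $(a,t')$ from $(l,v^*)$ via $e_W$, I recover the edge $e$, set $t = t'$, and multiply the guard membership by $\flow(l,x)$ (for nonzero-flow coordinates) to obtain a valid $\game_S$-move landing at $\gamma_1^{-1}$ of the $\game_W$-successor. The only subtlety is confirming that every edge of $\game_W$ genuinely arises from an edge of $\game_S$ and that the $t' \leftrightarrow t$ correspondence is well-defined across all coordinates simultaneously; since $\game_W$ is built edge-by-edge from $\game_S$ and the delay is shared, this holds. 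I would then remark that the symmetry of this argument — the forward and backward checks are mirror images under $\gamma_1$ and $\gamma_1^{-1}$ — is precisely what upgrades the simulation to a bisimulation, anticipating the companion lemma for $T(\game_W) \preceq_s T(\game_S)$.
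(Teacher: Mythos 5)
Your proposal is correct and follows essentially the same route as the paper's proof: reuse the same move $(a,t)$ in both games (your ``rescaled'' $t'$ collapses to $t$ exactly as in the paper, since $\flow_W(l,x)=1$ whenever $\flow(l,x)\neq 0$), check guards by dividing (resp.\ multiplying) by $\flow(l,x)$, and split the reset analysis into the same cases, invoking the initialization condition $\post_e(x)=\bot \implies \flow(l,x)=\flow(l',x)$ for non-reset variables and Equation~\eqref{eq:post} for reset ones. The treatment of the \playertwo{} direction by running the computation in reverse through $\gamma_1^{-1}$ also matches the paper's argument for the third condition of Definition~\ref{def-bisim}.
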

\begin{proof}
Note that, by definition, $q_0^*=\gamma_1(q_0) = q_0$ the initial configurations in the two games $(q_0,q^*_0)\in \gamma_1$ as required. 

Let $q =(l,v)$ be a configuration in $Q(\game_S)$, and let $q^* = \gamma_1(q)=(l,v^*)$ be a configuration in $Q(\game_W)$. 
Since the location $l$ is the same, the observation is preserved thus $\labeling(q) = \labeling(q^*)$, so the first point in Def.~\ref{def-bisim} holds.

We need to show that the 2nd item in Def.~\ref{def-bisim} holds, that is, 
for any two configurations $(l,v)$ in $Q_1(\game_S)$, and $(l,v^*)$ in 
$Q_1(\game_W)$ with $((l,v),(l,v^*)) \in \gamma_1 $
and for any move $m$ in $\mvs{\game_S}$, there exists a move $m^*$ in $\mvs{\game_W}$
such that the following holds:
\begin{align}
\label{eq:singSim1}
    \left(
\delta_S((l,v),m),
\delta_W((l,v^*), m^*)
\right) \in \gamma_1
\enspace.
\end{align}
The trick here, is to use the same move $m$ in both $T(\game_S)$ and $T(\game_W)$. 

Let $(l,v)$ be a configuration of $\playerone$, and let $m$ be a move, we first prove that $\delta_W((l,v^*), m)$ is well defined.
Since $(l',v') = \delta_S((l,v),m)$ with $m=(a,t)$, 
there exists $e=(l,a,\varphi_e,\post_e,l')$ in $\game_S$ 
such that for each $x \in \var$,
\begin{equation}\label{eq-tr3-v1}
v(x) + t \cdot \flow(l,x) \in \varphi_e(x)
\end{equation}
and $v' = \post_e(v)$.
By transformation of Sec.~\ref{trans:singToStop}, there exists an edge
$e^*=(l,a,\varphi_{e^*},\post_{e^*},l^{'})$ in $\game_W$.
Now, we show that:
$v^{*}+ t \cdot \flow_W(l,\cdot )$ satisfies $\varphi_{e^*}$.
To this end, observe that Identity \eqref{eq-tr3-v1} implies that for all $ x \in \var$: 
\begin{align*}
 \flow(l,x) \neq 0 &\implies
 \frac{v(x)}{\flow(l,x)} + t \in \frac{\varphi_e(x)}{\flow(l,x)}\\
 &\implies v^*(x) + t \in  \varphi_{e^*}(x) 
\end{align*}
Where the second implication is because $\flow(l,x) \neq 0 \implies \flow_W(l,x) = 1$.

When $\flow(l,x) = 0$, $v^*(x) = v(x)$, $\varphi_{e^*}(x) = \varphi_e(x)$ and 
Identity~\eqref{eq-tr3-v1} becomes $v(x) \in \varphi_e(x)$. 
Hence $v^*(x) \in \varphi_{e^*}(x)$.
This implies that 
$\delta_W(((l,v^*),m)$ is indeed well defined. 
We still need to show that it is the image of $\delta_S((l,v),m)$ by $\gamma_1$.
 
Let $(l',v') = \delta_S((l,v),m)$, then we need to study the following cases: \\
(1) For $x\in \var$ with $\flow(l, x) \neq 0$ and $\post_e(x) = \bot$:
\begin{align*}
  \gamma_1((l',v'(x)) &= 
  \left(l', \frac{v'(x)}{\flow(l',x)}\right)
  =\left(
  l', \frac {v(x) + t \cdot \flow(l,x)}{\flow(l',x)}
  \right)\\
  &= \left(
  l',\frac{v(x)}{\flow(l,x)} + t
  \right)\text{ because $\flow(l',x)=\flow(l,x)$, cf. Eq.~\eqref{eq:vstar},}\\
  &=(l', v^*(x) + t)\text{ because $\flow_W(l',x)=1$, cf. Eq.~\eqref{eq:flow}}\\
  &= \delta_W((l,v^*(x)),m) \enspace.
  \end{align*}
(2) For all $x\in \var$ with $\flow(l, x) = 0$ and $\post_e(x) = \bot$:
\begin{align*}
  \gamma_1((l',v'(x)) &= 
  \gamma_1(l', v(x)) \text{ because $\flow(l,x) = 0$, cf. Eq.~\eqref{eq:flow}}\\ 
  &=\left( l', v^*(x) \right)= \delta_W((l,v^*(x)),m) 
\text{ because $\flow_W(l',x) = 0$, cf. Eq.~\eqref{eq:flow}.}
  \end{align*}
(3) For all $x\in \var$ with $\flow(l', x) \neq 0$ and $\post_e(x) \neq \bot$:
\begin{align*}
  \gamma_1((l',v'(x)) &= 
  \left(l', \frac{v'(x)}{\flow(l',x)}\right)\\ 
  &=\left(
  l', \frac {\post_e(x)}{\flow(l',x)}
  \right) \text{ because $\post_e(x) \neq \bot$}\\
  &=\left(l',\post_{e^*}(x)\right) \text{ because $\flow(l',x) = 1$, cf. Eq.~\eqref{eq:flow}}\\
  &= \delta_W((l,v^*(x)),m) \enspace.
  \end{align*}
(4) Finally for all $x\in \var$ with $\flow(l, x) = 0$ and $\post_e(x) \neq \bot$:
\begin{align*}
  \gamma_1((l',v'(x)) &= 
  \left(l',v'(x)\right)
  =\left(
  l', \post_e(x)
  \right)
  =\left(l',\post_{e^*}(x)\right) 
  = \delta_W((l,v^*(x)),m) \enspace.
  \end{align*}
In all of the above cases, we have shown that $\gamma_1(\delta_S((l,v), m)) = \delta_W(\gamma_1((l,v)), m)$,
that is, $\gamma_1$ preserves the transition relation in both 
$T(\game_S)$ and $T(\game_W)$.

Now, for proving the 3rd point in Def. \ref{def-bisim},
Let $q=(l,v)$ be a configuration of $\playertwo$, and let $m$ be a move in $\mvs{\game_W}$, we take the same move in $\mvs{\game_S}$, we first prove that $\delta_S((l,v),m)$ is well defined in $T(\game_S)$. \\
Since $(l',{v^*}')= \delta_W((l,v^*),m)$ with $m=(a,t)$ there exists $e^*=(l,a,\varphi_{e^*}, \post_{e^*},l')$ in $\game_W$ such that for each $x\in \var$, 
\begin{equation}\label{eq-tr3-v2}
v^*(x) + t \cdot \flow_W(l,x) \in \varphi_{e^*}(x)
\end{equation}
and ${v^*}'= \post_{e^*}(v)$. By transformation of Sec.~\ref{trans:singToStop}, the existence of $e^*$ in $\game_W$ is the result of the existence of $e=(l,a,\varphi_e, \post_e,l')$ in $\game_S$. Now we show that $v+ t \cdot \flow(l,\cdot)$ satisfies $\varphi_e$. Observe that identity \eqref{eq-tr3-v2} implies that for all $x \in \var$:
\begin{align*}
 \flow_W(x) \neq 0 &\implies v^*(x)+ t \in \varphi_{e^*}(x) \\
& \implies \frac{v(x)}{\flow(l,x)} + t \in \frac{\varphi_e(x)}{\flow(l,x)}\\
 &\implies v(x) + t \cdot \flow(l,x) \in \varphi_e(x)
\end{align*}
Where the second implication is because $\flow_W(x)=1$ implies $\flow(x) \neq 0$. 
When $\flow_W(l,x)=0$, $v(x)=v^*(x)$, $\varphi_e(x)= \varphi_{e^*}(x)$, and identity \eqref{eq-tr3-v2} becomes $v^*(x)\in \varphi_{e^*}(x)$, hence $v(x)\in \varphi_{e}(x)$. This implies that $q'=\delta_S((l,v),m)$ is indeed well defined with $q'=(l',v')$ and $v'=\post_e(v+t\cdot \flow(l,\cdot))$. We still need to prove that the image of $\delta_S((l,v),m)$ by $\gamma_1$ is $\delta_W((l,v^*),m)$.

For $(l',{v^*}')= \delta_W((l,v^*),m)$ and we name $(l',v')= \delta_S((l,v),m)$, we have $\forall x \in \var$ with $\flow_W(l,x) \neq 0$ and $\post_{e^*}(x)= \bot$:
\begin{align*}
 (l',{v^*}'(x)) &=(l',{v^*}(x) +t ) =\left(l',\frac {v(x)}{\flow(l,x)} + t \right)\\
 &=\left(l',\frac {v(x) +t \cdot \flow(l,x) } {\flow(l,x)}\right) 
            =\left(l',\frac { v(x) +t \cdot \flow(l,x) } {\flow(l',x)} \right)\\
 &=\left(l',\frac {v'(x)} {\flow(l',x)} \right) 
       = \gamma_1(l', v'(x))
    \end{align*}
Consider the case $\flow_W(l,x) = 0$, hence for all $x \in \var$ with $\post_{e^*}(x)= \bot$:
\begin{align*}
 (l',{v^*}'(x)) &=(l',{v^*}(x)) =\gamma_1(l',v(x)) = \gamma_1(l', v'(x))
\end{align*}
Now for all $x \in \var$ with $\flow_W(l',x) \neq 0$ and $\post_{e^*}(x) \neq \bot$: 
\begin{align*}
 (l',{v^*}'(x)) &=(l',\post_{e^*}(x) ) 
    =\left(l',\frac {\post_{e^*}(x)}{\flow(l',x)}\right)\\
  &=\left(l',\frac {v(x) +t \cdot \flow(l,x) } {\flow(l,x)}\right)
  =\left(l',\frac { v(x) +t \cdot \flow(l,x) } {\flow(l',x)} \right)\\
&=\left(l',\frac {v'(x)} {\flow(l',x)} \right)
     = \gamma_1(l', v'(x))
    \end{align*}
Finally, for all $x \in \var$ with $\flow_W(l',x) = 0$ and $\post_{e^*}(x) \neq \bot$: 
\begin{align*}
 (l',{v^*}'(x)) =(l',\post_{e^*}(x) ) 
 =(l',\post_{e}(x)
=(l',v'(x))
= \gamma_1(l', v'(x))
    \end{align*}
Therefore,
$\gamma_1(\delta_S((l,v), m)) = \delta_W(\gamma_1((l,v)), m)$.
This ends the proof of this lemma. 
\qed\end{proof}

\begin{lemma}
$\gamma^{-1}$ witnesses that
$T(\game_W) \preceq_s T(\game_S)$.  
\end{lemma}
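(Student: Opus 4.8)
The plan is to exploit the fact that $\gamma_1$ is a bijection and that the proof of the previous lemma already established, for each move, the two ingredients we need: well-definedness of the matching transition and the exact commutation $\gamma_1(\delta_S((l,v),m)) = \delta_W(\gamma_1((l,v)),m)$, both obtained by reusing the \emph{same} move $m$ in the two games. Reading those computations in the opposite direction will yield the three conditions of Def.~\ref{def-bisim} for the relation $\gamma_1^{-1} \subseteq (Q_1(\game_W) \times Q_1(\game_S)) \cup (Q_2(\game_W) \times Q_2(\game_S))$, witnessing $T(\game_W) \preceq_s T(\game_S)$.

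First I would dispatch the trivial points. Since $\gamma_1(q_0)=q_0$, we have $(q_0,q_0)\in\gamma_1^{-1}$, so the initial pair belongs to the relation; and for any pair $((l,v^*),(l,v))\in\gamma_1^{-1}$ the location is shared, so $\labeling(l,v^*)=\labeling(l,v)$ and Condition~1 holds. For Condition~2, consider $((l,v^*),(l,v))\in\gamma_1^{-1}$ with both configurations in $Q_1$; here $\game_W$ plays the role of $\game^1$, so I must match an arbitrary move $m$ issued from $(l,v^*)$ in $T(\game_W)$ by a move from $(l,v)$ in $T(\game_S)$, again taking the same move $m$. This is exactly the computation carried out for the third point of the previous lemma, where, starting from a move of $T(\game_W)$, it was shown that $\delta_S((l,v),m)$ is well defined and that $\gamma_1(\delta_S((l,v),m))=\delta_W((l,v^*),m)$; this identity rewrites as $(\delta_W((l,v^*),m),\delta_S((l,v),m))\in\gamma_1^{-1}$, which is what Condition~2 demands. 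Symmetrically, for Condition~3 (both configurations in $Q_2$) I must match a move issued from the $\game_S$ side, and this is precisely the content of the second point of the previous lemma, where, starting from a move of $T(\game_S)$, the well-definedness of $\delta_W((l,v^*),m)$ and the same commutation identity were proved. Thus Conditions~2 and~3 for $\gamma_1^{-1}$ correspond, respectively, to the third and second points already established for $\gamma_1$, the directions crossing because of the contravariant treatment of \playertwo{}'s configurations.

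I do not expect a genuine obstacle here: the only care needed is bookkeeping, namely checking that $\gamma_1^{-1}$ lands in the correct product of configuration sets and tracking the swap of the two simulation conditions induced by exchanging the roles of $\game^1$ and $\game^2$. The substantive fact—that a single move serves in both games and that $\gamma_1$ intertwines the two transition relations—was the heart of the previous lemma, and the bijectivity of $\gamma_1$ lets it be transported verbatim to the inverse relation. Together with the previous lemma, this establishes that $T(\game_S)$ and $T(\game_W)$ are bisimilar.
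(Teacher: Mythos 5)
Your proposal is correct, but it is organized differently from the paper's proof. The paper proves this lemma from scratch: it re-runs the whole verification for $\gamma_1^{-1}$, with the algebra written in the ``multiplicative'' direction (recovering $v(x)=v^*(x)\cdot\flow(l,x)$ and pushing constraints through by multiplying with the flows), including the full case analysis on $\flow$ and $\post_{e^*}$ for both the \playerone{} and \playertwo{} conditions. You instead reduce the lemma to the previous one: you observe that its proof already contains the two player-independent transfer facts --- (i) if $\delta_S((l,v),m)$ is defined then $\delta_W(\gamma_1(l,v),m)$ is defined and $\gamma_1(\delta_S((l,v),m))=\delta_W(\gamma_1(l,v),m)$, and (ii) the same statement starting from a move defined in $T(\game_W)$ --- and that conditions 2 and 3 of Def.~\ref{def-bisim} for $\gamma_1^{-1}$ are exactly facts (ii) and (i) respectively, the swap being forced by the asymmetric treatment of the two players in the definition of simulation. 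This is sound because $\gamma_1$ is a function with a functional inverse, so membership in $\gamma_1^{-1}$ is equivalent to the commutation identity, and because the computations in the first lemma never use which player owns the location (even though the paper states them under specific ownership hypotheses); to be fully rigorous you should make that ownership-independence explicit, e.g.\ by restating (i) and (ii) as claims quantified over all configurations. What your route buys is economy and a clear explanation of \emph{why} the inverse relation works --- the bisimulation is really one two-way commutation property; what the paper's route buys is a self-contained proof that does not rest on a meta-observation about how the previous proof was written, at the cost of about a page of duplicated algebra.
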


\begin{proof}
Note that, by definition, $q_0=\gamma_1^{-1}(q_0^*)$ the initial configurations in the two games as required. 

Let $q^*=(l,v^*)$ be a configuration in $Q(\game_W)$ and let $q=(l,v)= \gamma_1^{-1}(q^*)$ be a configuration in $Q(\game_S)$.
Since the location $l$ is the same, the observation is preserved thus $\labeling(q) = \labeling(q^*)$, so the first point in Def.~\ref{def-bisim} holds.

We now show that $\gamma_1^{-1}$ satisfies the 2nd point in Def.~\ref{def-bisim}, for all $q^*=(l,v^*) \in Q(\game_W) $ and $q=(l,v)=\gamma_1^{-1}(q^*) \in Q(\game_S)$. Assume that $q^*$ a configuration of $\playerone$, for all $m=(a,t) \in \mvs{T(\game_W)}$ and ${q^*}'=(l',{v^*}')=\delta_W(q^*,m)$ in $T(\game_W)$, we take the same move $m$ in $T(\game_S)$, we first prove that $\delta_S(q,m)$ is well defined in $T(\game_S)$, let us call it $q'$. Afterwards we prove that $q'=\gamma_1^{-1}({q^*}')$. We have for each $x\in \var$,
$$ {v^*}'(x) ={v^*}(x) +t \cdot \flow_W(l,x) \in \varphi_{e^*}(x)$$
And ${v^*}'=\post_{e^*}(v^*)$ for the edge $e^*=(l,a,\varphi_{e^*}(x), \post_{e^*}(x),l')$, $e^*$ is obtained from the edge $e=(l,a,\varphi(x), \post(x),l')$ in $\game_S$ by transformation of Sec.~\ref{trans:singToStop}. Now we show that $v + \flow(l,\cdot) \in \varphi_e$. For all $x\in \var$ and $\flow(l,x)\neq 0$ we have
\begin{align*}
&{v^*}(x) +t \cdot \flow_W(l,x) \in \varphi_{e^*}(x)\\
&\implies {v^*}(x) \cdot \flow(l,x) +t \cdot \flow_W(l,x) \cdot \flow(l,x) \in \varphi_{e^*}(x)\cdot \flow(l,x)\\
&\implies v(x)+t \cdot \flow(l,x) \in \varphi_{e^*}(x)\cdot \flow(l,x)\\
&\implies v(x)+t \cdot \flow(l,x) \in \varphi_e(x)
\end{align*}
For all $x\in \var$ and $\flow(l,x) = 0$, we obtain $v^*(x)=v(x)$ and $\varphi_{e^*}(x)=\varphi_e(x)$. It follows that $v(x)\in \varphi_e(x)$. This implies that indeed $q'=\delta_S(q,m)=(l',v')$ with $v'=\post_e(v +t \cdot \flow(l,\cdot))$ is well defined. 

Now let us prove $q'=\gamma_1^{-1}({q^*}')$. For all $x\in\var$: \\
(1) For $\flow(l,x)\neq 0$ and $\post_{e^*}=\bot$ we have:
\begin{equation*}
\begin{split}
\gamma_1^{-1}(l', {v^*}')  &= (l', v^{*'}(x) \cdot {\flow(l',x)}
= (l', ( v^*(x) + t \cdot \flow_W(l,x) ) \cdot {\flow(l',x)})\\
&= (l', ( v^*(x) + t )  \cdot \flow(l,x) )
= (l', ( v^*(x)  \cdot \flow(l,x)  + t \cdot \flow(l,x)))\\
&= (l', v + t \cdot \flow(l,x))=(l',v'(x))
\end{split}
    \end{equation*}
(2) For $\post_{e^*}\neq \bot$ and $\flow(l',x)\neq 0$ we have:
\begin{equation*}
\begin{split}
\gamma_1^{-1}(l', {v^*}')  &= (l', v^{*'}(x) \cdot {\flow(l',x)}= (l', \post_{e^*}(x) \cdot {\flow(l',x)})\\
&= (l', \post_e(x) )=(l',v'(x))
\end{split}
    \end{equation*}
(3) For $\post_{e^*}\neq \bot$ and $\flow(l',x)=0$ we have:
\begin{equation*}
\begin{split}
\gamma_1^{-1}(l', {v^*}')  &= (l', v^{*'}(x))= (l', \post_{e^*}(x))= (l', \post_e(x) )=(l',v'(x))
\end{split}
    \end{equation*}
We conclude that $(l',v')=\gamma_1^{-1}(l', {v^*}')$.

Now, for proving the 3rd point in Def. \ref{def-bisim}: let $q^*$ be a configuration of $\playertwo$, and let $m$ be a move in $\mvs{\game_S}$ and $q'=(l',v')=\delta_S((l,v),m)$ in $T(\game_S)$, we take the same move in $\mvs{\game_W}$, we first prove that $q{^*}'=\delta_W((l,v^*),m)$ is well defined in $T(\game_W)$, and then we prove that $\gamma_1^{-1}(q{^*}')=(q')$. \\
Since $q'=(l',v')=\delta_S((l,v),m)$ then there exists an edge $e=(l,a,\varphi_{e},\post_{e},l')$. 
By transformation there exists an edge in $\game_W$ $e^*=(l,a,\varphi_{e^*},\post_{e^*},l')$. 

For each $x \in \var$ with $\flow(l,x) \neq 0$, we have:
\begin{align}
   v(x) + t \cdot \flow(l,x) & \in \varphi_e(x) \notag \\
   & \implies v^*(x) \cdot \flow(l,x) + t \cdot \flow(l,x) \in \frac{\varphi_e(x)}{\flow(l,x)} \cdot \flow(l,x) \notag \\
      &\implies v^*(x) + t \in \varphi_{e^*}(x) \notag \\
    &\implies v^*(x) + t \cdot \flow_W(l,x) \in \varphi_{e^*}(x) 
    \label{varphigammainverse}
\end{align}
Notice that for $\flow(l,x) = 0$, we obtain $\flow_W(l,x) = 0$, $v^*(x)=v(x)$, $\varphi_e(x)=\varphi_{e^*}(x)$, hence $v^*(x) \in \varphi_{e^*}(x)$. 
Let ${v^*}'= \post_{e^*}(v^* + t \cdot \flow_W(l,\cdot))$. We have proved that $q{^*}'= (l,{v^*}')=\delta_W((l,v^*),m)$ is well defined. 

Now we prove that $\gamma_1^{-1}(l',{v^*}')=(l',v')$. For each $x \in \var$ consider the case $\flow(l',x) \neq 0$ and $\post_{e^*}=\bot$,
\begin{align*}
\gamma_1^{-1}(l',v^{*'}(x))&= (l',v^{*'}(x) \cdot \flow(l',x)) 
= \big(l',( v^*(x) + t \cdot \flow_W(l,x) ) \cdot \flow(l',x) \big) \\
&=  \big(l',v^*(x) \cdot \flow(l',x) + t \cdot \flow(l',x)  \big) \\
& =  \big(l',v^*(x) \cdot \flow(l,x) + t \cdot \flow(l,x) \big)  \\
&=  \big(l',v(x)  + t \cdot \flow(l,x) \big) 
= (l',v'(x))
    \end{align*}
Now for each $x \in \var$ $\flow(l',x) = 0$ and $\post_{e^*}=\bot$,
\begin{align*}
\gamma_1^{-1}(l',v^{*'}(x))&= (l',v^{*'}(x)) = (l',v^*(x)) 
= (l',v(x)) 
= (l',v'(x))
\end{align*}
Same for the case for each $x \in \var$ with $\post_{e^*}(x)\neq\bot$ and $\flow(l',x) = 0$,
\begin{align*}
\gamma_1^{-1}(l',v^{*'}(x))&=(l',v^{*'}(x))=(l',\post_{e^*}(x)) = (l',\post_e(x)) = (l',v'(x))
    \end{align*} 
Finally consider the case for each $x \in \var$ with $\post_{e^*}(x)\neq\bot$ and $\flow(l',x) \neq 0$,
\begin{align*}
\gamma_1^{-1}(l',v^{*'}(x))&= (l',\post_{e^*}(x)\cdot \flow(l',x)) = \left( l', \frac{\post_e(x)}{\flow(l',x)}\cdot \flow(l',x) \right) \\
&=  \big(l',\post_e(x) \big) 
= (l',v'(x))
    \end{align*}
This ends the proof.\qed
\end{proof}
\section{From Initialized Stopwatch to Updatable Timed Games}
The next step is to transform each initialized stopwatch game into a game where the dynamics of each variable is never zero, by eliminating all the flows of value zero. The games obtained by this transformation are called \emph{updatable timed games}. 
\begin{definition}[Updatable Timed Game]
An updatable timed game $\game_{U}$ of $n$-dimension is an initialized singular game in 
which all variables are clocks, i.e. $\flow(l,x) = 1$ for any location $l$ and any 
variable $x\in \var$.
\end{definition}
Note that these structures generalize timed games by allowing updates with any rational value, similarly with updatable timed automata from \cite{BouyerDFP04}. 
\subsection{Transformation}
Given $\game_W$ an initialized stopwatch game with $n$ variables (whose semantics is denoted $T(\game_W)$ in the sequel) we construct an updatable timed game in two steps.
We first transform $\game_W$ into another stopwatch game $\overline{\game_W}$ 
in which the locations carry some extra information about resets.
Then this stopwatch game is transformed into an updatable timed game.

The need for the intermediate stopwatch game comes from the fact that each stopwatch in the original game is simulated by a clock in the resulting game.  But clocks 
are always incremented when time pass, so when some stopwatch in a location $l$ in which the stopwatch's flow is $0$ must be encoded with a clock, 
we need to reset this clock on any transition leaving $l$. The value to which this clock must be updated depends on the sequence of transitions through which $l$ has been reached.
Along a finite run $\rho$ ending with the edge $e$ this reset value it is the update to which the stopwatch was reset on the latest edge $e'$ when the stopwatch's flow changed from $1$ to $0$. 
These bits of memory are modeled by the extra information encoded into locations of the new stopwatch game.

The formalization of the first transformation is the following: 
$$\overline{\game_W}= (\overline{l_0}, \overline{L_1}, \overline{L_2},\var, \act, \obs,\flow, \overline{\edge},\labeling)$$
\begin{enumerate}
    \item $\overline{l_0}=(l_0,f_{l_0})$ where $f_{l_0}: \var \to \{0\}$.
    \item $\overline{L_1}= L_1 \times F$ where $F$ is the set that consists of 
    functions $f : \var \to K_{\bot}$, where $K$ 
    is the set of all constants used in $\game_W$ and 
    $K_{\bot}=K \cup \{\bot\}$.
    \item $\overline{L_2}=L_2 \times F$. 
    \item $\labeling$ is extended over $\overline{L_1} \cup \overline{L_2}$ as expected.
    \item $\overline{\edge}$ is the set of edges $\overline{e}=(\overline{l_1},a,\varphi, \post,\overline{l_2})$ where 
    $\overline{l_1} =(l_1,f_{l_1}), \overline{l_2} = (l_2,f_{l_2})$ and
    there exists an edge $e\in \edge$ with $e=(l_1,a,\varphi,\post,l_2)$ 
    and for all $x \in \var$:
    \begin{equation}
    \label{transformation-2}
        f_{l_2}(x) = 
        \begin{cases}
            \bot & \text{ if } \flow(l_2,x)=1\enspace,\\
            \post(x) & \text{ if } \flow(l_2,x) = 0 \land \post(x)\neq \bot\enspace,\\
            f_{l_1}(x) & \text{ if } \flow(l_2,x) = 0 \land \post(x)=\bot\enspace.
        \end{cases}
    \end{equation}
\end{enumerate}
The semantics of $\overline{\game_W}$ that we call $T(\overline{\game_W})$ is as follows
$$T(\overline{\game_W}) = (Q(\overline{\game_W}), \mvs{\overline{\game_W}}, \overline{q_0},\overline{\delta_W})$$
Now we construct $\game_U$ an updatable timed game from $\overline{\game_W}$ in the next step.\\
Step 2:
$$\game_{U} = (\overline{l_0}, \overline{L_1}, \overline{L_2}, \var, \act, \obs, \edge_U, \labeling)$$
\begin{enumerate}
    \item $\overline{l_0}, \overline{L_1}, \overline{L_2}, \obs, \act,\labeling$ are all same as in $\overline{\game_W}$.
    \item $\edge_U$ is the set of edges $e_U=(\overline{l_1},a,\varphi, \post_U,\overline{l_2})$ with $\post_U: \var\to \mathbb{Q}^n \cup \{\bot\}$ such that for all $\overline{e} \in \overline{\edge}$ with $\overline{e}=(\overline{l_1}, a, \varphi, \post,\overline{l_2})$ we have for all $x \in \var$,
    \begin{equation}
       \label{overline.post}
       \post_U(x)=
     \begin{cases}
        f_l(x) & \text{ if }f_l(x) \neq \bot\enspace,\\
        \post(x) & \text{ otherwise }\enspace.     
      \end{cases}
    \end{equation}
        
Note that if we have a reset of a variable $x$ with $flow(l,x)=0$, then $\post_U(x)=\post(x)=f_l(x)$. Note that $\edge_U$ is same as $\overline{\edge}$ with different reset function $\post_U(\var)$ on each edge.
\end{enumerate}
The transition system of the updatable timed game $\game_U$, that we call $T(\game_U)$ is as follows
$$T(\game_U)= (Q(\game_U),\mvs{\game_U}, q_{0_U},\delta_U)$$
We define the relation 
$\beta_1 \subseteq Q(\game_W) \times Q(\overline{\game_W})$ as follows: 
\[
\beta_1 = \{ ((l,v), (\overline{l},\overline{v})) \mid (l,v) \in Q(\game_W), \overline{l}=(l,f_l) \text{ where } f_l : X \to K_{\bot} \text{ and } \overline{v} = v \}
\]
And its inverse $\beta_1^{-1} \subseteq Q(\overline{\game_W}) \times Q(\game_W)$,
\[
\beta_1^{-1}=\{ \left( (\overline{l},\overline{v}),(l,v)\right)  \mid (\overline{l},\overline{v}) \in Q(\overline{\game_W}), ((l,v),(\overline{l},\overline{v}))\in \beta_1\}
\]
\subsection{Bisimulation between $T(\game_W)$ and $T(\overline{\game_W})$}
In this section we prove that $\beta_1$ is a bisimulation relation.
\begin{lemma} \label{first}
$\beta_1$ witnesses $T(\game_W) \preceq_s T(\overline{\game_W})$.
\end{lemma}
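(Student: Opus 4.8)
The plan is to check the three clauses of Definition~\ref{def-bisim} one by one, and the whole argument rests on a single structural observation about the construction of $\overline{\game_W}$: passing from $\game_W$ to $\overline{\game_W}$ refines \emph{only} the discrete location, splitting each $l$ into the pairs $(l,f_l)$, while the flow $\flow$, the guard $\varphi$ and the reset function $\post$ carried by an edge $\overline{e}\in\overline{\edge}$ are copied verbatim from the underlying edge $e\in\edge$. Since $\beta_1$ moreover keeps the valuation untouched ($\overline{v}=v$), the simulating game can always answer with the \emph{same} move $m=(a,t)$, and the only quantity that evolves differently on the two sides is the memory function attached to the reached location --- which is irrelevant for $\beta_1$, as that relation places no constraint on the $f$-component of a configuration.

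First I would dispatch the base case and the labeling clause. The initial configurations are $q_0=(l_0,\vec{0})$ and $\overline{q_0}=(\overline{l_0},\vec{0})$ with $\overline{l_0}=(l_0,f_{l_0})$, so $(q_0,\overline{q_0})\in\beta_1$. For clause~1, any pair in $\beta_1$ has the form $((l,v),((l,f_l),v))$, and since $\labeling$ is extended to $\overline{L_1}\cup\overline{L_2}$ by $\labeling(l,f_l)=\labeling(l)$, the two observations coincide.

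For clause~2, fix a $\playerone$-configuration $(l,v)$ of $\game_W$ related to $((l,f_l),v)$, and take any move $m=(a,t)$ with $\delta_W((l,v),m)=(l',v')$. This transition is witnessed by an edge $e=(l,a,\varphi_e,\post_e,l')\in\edge$ with $v+t\cdot\flow(l,\cdot)\in\varphi_e$ and $v'=\post_e(v+t\cdot\flow(l,\cdot))$. By construction of $\overline{\edge}$, there is a corresponding edge $\overline{e}=((l,f_l),a,\varphi_e,\post_e,(l',f_{l'}))$ from exactly the source $(l,f_l)$, where $f_{l'}$ is the function determined by \eqref{transformation-2}. Replaying $m$: since $\overline{e}$ carries the same guard and the flow is unchanged, $m$ is enabled at $((l,f_l),v)$; since the reset is the same, the valuation evolves to the same $v'$. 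Hence $\overline{\delta_W}(((l,f_l),v),m)=((l',f_{l'}),v')$ and $((l',v'),((l',f_{l'}),v'))\in\beta_1$. Clause~3 is symmetric: a move $m'$ enabled at a $\playertwo$-configuration of $\overline{\game_W}$ comes from an edge $\overline{e}$ whose underlying edge $e\in\edge$ enables $m'$ at the matching configuration of $\game_W$, again producing identical valuations and a $\beta_1$-related pair.

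The calculations are routine; the one point deserving explicit care is the claim that the valuation components coincide after each matched transition, i.e.\ that $\overline{v}'=v'$. This follows solely from the fact that $\overline{\edge}$ preserves $\varphi$, $\post$ and $\flow$ and merely refines the location, so I would state this invariance once at the start and then invoke it to discharge both the enabledness and the valuation-matching obligations in clauses~2 and~3. The many-to-one character of $\beta_1$ --- each $(l,v)$ is related to $((l,f_l),v)$ for every $f_l\in F$ --- causes no difficulty, precisely because the relation never refers to the memory component $f_{l'}$ of the configuration actually reached.
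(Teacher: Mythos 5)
Your proof is correct and follows essentially the same route as the paper's: replay the identical move $m=(a,t)$ on the other side, use the fact that the construction of $\overline{\edge}$ lifts each edge $e$ of $\game_W$ to an edge from $(l,f_l)$ with the same guard, reset and flow (and conversely projects each $\overline{e}$ back to an underlying $e$), and conclude that the reached valuations coincide so the resulting pair is in $\beta_1$. Your explicit up-front statement of the invariance (only locations are refined, dynamics untouched) and the remark that $\beta_1$ ignores the memory component are just a cleaner packaging of what the paper's proof does implicitly.
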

\begin{proof} \label{Proofbeta1}
Note that, already by definition $(\overline{l_0},v_0)=\left( (l_0,f_{l_0}) ,v_0 \right)$ and $(l_0,v_0)$ are the initial configurations in the two games, hence $\left((l_0,v_0),(\overline{l_0},v_0)\right) \in \beta_1$ as required. 

Let $(l,v) \in Q({\game_W})$ and $ (\overline{l},v)=\left((l,f_l),v\right) \in Q(\overline{\game_W})$ with $\left((l,v), (\overline{l},v)\right) \in \beta_1$.
Since the location $l$ is the same, the observation is preserved thus $\labeling(l,v)=\labeling (\overline{l},v)$. So the first point of the Def. \ref{def-bisim} holds. 

Let $(l,v)$ be a configuration of $\playerone$, let $m=(a,t) \in \mvs{\game_W}$ and $(l',v') = \delta_W((l,v),(a,t))$ in $T(\game_W)$.
To prove point 2 in Def. \ref{def-bisim}, we take the same $m$ in $\mvs{\overline{\game_{W}}}$ and prove that $\overline{\delta_W}((\overline{l},v),(a,t))=(\overline{l'},v')$ 
is well defined, and then $\left( (l',v'),(\overline{l'},v')\right) \in  \beta_1 $.
\label{SENS1}
$\delta_W((l,v),(a,t))=(l',v')$ implies that there exists an edge $e= (l,a,\varphi_e,\post_e,l')$ in $\game_W$, hence, by construction \ref{transformation-2} there exists $\overline{e}=((l,f_l), a, \varphi_e, \post_e,(l',f_{l'}))$ in $\overline{\game_W}$. 
This implies that $\overline{\delta_W}((\overline{l},v),(a,t))=(\overline{l'},v')$ with 
$\overline{l'}=(l',f_{l'})$ is well defined in $T(\overline{\game_W})$. It follows that $\left( (l',v'),(\overline{l'},v')\right) \in  \beta_1 $ since $v'$ is the same for the two configurations and $f_{l'}$ is well defined.

For proving the third point of the Def. \ref{def-bisim}, let $(l,v)$ be a configuration of $\playertwo$,
and $m=(a,t) \in \mvs{\overline{\game_{W}}}$ with $\overline{\delta}\left(((l,f_l),v),(a,t)\right)=((l',f_{l'}),v')$ in $\overline{\game_W}$. 
Similarly to the above proof, we take the same move $m$ in $ \mvs{\game_{W}}$, then prove that $\delta_W((l,v),(a,t))=(l',v')$ is well defined and $\left( (l',v'),(\overline{l'},v')\right) \in  \beta_1 $.
\label{SENS2}
This follows by noting that the existence of transition $\overline{\delta}\left(((l,f_l),v),(a,t)\right)=((l',f_{l'}),v')$ implies there exists an edge $\overline{e}= ((l,f_l),a,\varphi_e,\post_e,(l',f_{l'}))$ in $\overline{\game_W}$. 
By construction \ref{transformation-2}, $\overline{e}$ is obtained from $e= (l,a,\varphi_e,\post_e,l')$ in $\game_W$. 
Note that $\post_e$ is preserved by the construction. Hence $(l',v')=\delta_W((l,v),(a,t))$ with $v'=\post_e(v+t\cdot \flow(l,\cdot))$ is well defined, which ends the proof. 
\qed\end{proof}
 \begin{lemma}
 $\beta_1^{-1}$ witnesses $ T(\overline{\game_W}) \preceq_s T(\game_W)$.
 \end{lemma}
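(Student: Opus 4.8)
The plan is to mirror the proof of Lemma~\ref{first}, exploiting that the transformation of Eq.~\eqref{transformation-2} leaves the guards $\varphi_e$, the resets $\post_e$ and the flow $\flow$ untouched and merely decorates each location $l$ with a memory component $f_l$. As in Lemma~\ref{first}, playing the \emph{same} move $m=(a,t)$ in both games is the right choice; the only genuine task is to observe that the decoration $f_{l'}$ on the target location is fixed deterministically by Eq.~\eqref{transformation-2}, so that the configuration reached in $\overline{\game_W}$ projects, after forgetting the memory, exactly onto the one reached in $\game_W$, with identical valuation.

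First I would discharge the preliminaries. By definition $\beta_1^{-1}$ relates $((l,f_l),v)$ to $(l,v)$, hence it pairs the initial configuration $\overline{q_0}=((l_0,f_{l_0}),v_0)$ with $q_0=(l_0,v_0)$, so the initial pair belongs to the relation. Since dropping $f_l$ leaves $l$ unchanged and $\labeling$ ignores the memory component, condition~1 of Def.~\ref{def-bisim} holds for every pair of $\beta_1^{-1}$.

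For condition~2, take $p=((l,f_l),v)\in Q_1(\overline{\game_W})$ related to $q=(l,v)\in Q_1(\game_W)$ and any move $m$ with $p'=((l',f_{l'}),v')=\overline{\delta_W}(p,m)$. This transition stems from an edge $\overline{e}=((l,f_l),a,\varphi_e,\post_e,(l',f_{l'}))$, which by Eq.~\eqref{transformation-2} exists precisely because $e=(l,a,\varphi_e,\post_e,l')\in\edge_W$. Firing the same move $m$ from $q$ through $e$ is therefore enabled, because $\varphi_e$ and $\flow$ are identical in both games, and it yields $q'=(l',v')$ with the very same $v'$, since $\post_e$ is preserved. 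Thus $(p',q')\in\beta_1^{-1}$. Condition~3 is the symmetric argument: given a move $m$ from $q=(l,v)\in Q_2(\game_W)$ to $q'=(l',v')$ via $e$, the annotated edge $\overline{e}$ supplied by Eq.~\eqref{transformation-2} is enabled from $p=((l,f_l),v)$ and leads to $p'=((l',f_{l'}),v')$ with the same $v'$ and with $f_{l'}$ determined by Eq.~\eqref{transformation-2}; hence $(p',q')\in\beta_1^{-1}$.

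The only point demanding care — and the reason this relation is not a bijection — is that a single configuration $(l,v)$ of $\game_W$ is the image of many configurations $((l,f_l),v)$ of $\overline{\game_W}$. Since $\beta_1^{-1}$ simply forgets the memory and the definition of $\beta_1$ imposes no constraint on $f_l$, this multiplicity is harmless: whatever value $f_{l'}$ the construction produces on the target location, the pair $(((l',f_{l'}),v'),(l',v'))$ lies in $\beta_1^{-1}$. I therefore expect no real obstacle; the whole content is the bookkeeping that invokes the edge correspondence of Eq.~\eqref{transformation-2} in the appropriate direction and uses the preservation of $\varphi_e$, $\post_e$ and $\flow$ to force the valuations to coincide.
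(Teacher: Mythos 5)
Your proof is correct and follows essentially the same approach as the paper, which simply states that the argument mirrors that of Lemma~\ref{first}: play the identical move $m=(a,t)$ in the other game, use the edge correspondence of Eq.~\eqref{transformation-2} (noting that guards, resets and flows are untouched and only the memory component $f_l$ is added deterministically), and conclude that the reached valuations coincide. Your additional remark that $\beta_1^{-1}$ merely forgets the memory component, so the many-to-one nature of the relation causes no difficulty, is a useful clarification consistent with the paper's observation that the bisimulation need not be a bijection.
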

 The proof of this lemma is similar to the proof of Lemma \ref{first}

We define the mapping
$\beta_2: Q(\overline{\game_W}) \to Q({\game_U})$ as follows: 
\[\forall (\overline{l},v)=((l,f_l),v) \in Q(\overline{\game_W}),~\beta_2 \colon ((l,f_l),v) \mapsto ((l,f_l),v) \enspace.\]

\begin{lemma} \label{bijec}
$\beta_2$ is the bisimulation between $T(\overline{\game_W})$ and $T(\game_U)$.
\end{lemma}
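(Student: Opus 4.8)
The relation $\beta_2$ is the identity on configurations, hence a bijection; so proving it a bisimulation amounts to showing that, for corresponding configurations and the \emph{same} move, the transition relations of $T(\overline{\game_W})$ and $T(\game_U)$ reach the same target configuration. This single uniform argument discharges both the \playerone{} and \playertwo{} clauses of Def.~\ref{def-bisim} and yields both $\preceq_s$ directions at once. The first point of Def.~\ref{def-bisim} is immediate, since $\beta_2$ leaves the location $(l,f_l)$ unchanged and $\labeling$ depends only on the location. The plan is then to fix a configuration $((l,f_l),v)$ and a move $(a,t)$, use the edge $\overline e=((l_1,f_{l_1}),a,\varphi,\post,(l_2,f_{l_2}))$ together with its image $e_U$, and compare guards and resets variable by variable.

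The key ingredient is an invariant on reachable configurations: for every $x\in\var$ with $\flow(l,x)=0$ one has $v(x)=f_l(x)$. I would prove this by induction on the length of the run, using the reset discipline of initialized games (a change of flow forces a reset) together with the three cases of Eq.~\eqref{transformation-2}: entering a flow-$0$ location either resets $x$ to $\post(x)=f_{l_2}(x)$, or leaves $x$ untouched while $f_{l_2}(x)=f_{l_1}(x)$ and the flow was already $0$, so that $v(x)$ and $f_l(x)$ stay equal.

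Granting the invariant, agreement of the post-transition valuations is a routine case analysis on $\flow(l_1,x)$, $\flow(l_2,x)$ and whether $\post(x)=\bot$, entirely analogous to the case splits of the previous lemmas. When $\flow(l_2,x)=1$ we get $\post_U(x)=\post(x)$; when $\flow(l_2,x)=0$ and $\post(x)\neq\bot$, Eq.~\eqref{transformation-2} gives $f_{l_2}(x)=\post(x)$ so again $\post_U(x)=\post(x)$; and when $\flow(l_2,x)=0$ with $\post(x)=\bot$ (which forces $\flow(l_1,x)=0$), we have $\post_U(x)=f_{l_2}(x)=f_{l_1}(x)=v(x)$, matching the frozen value produced in $T(\overline{\game_W})$. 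Thus in every case $v'$ coincides in both games and the invariant is re-established at the target.

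I expect the reconciliation of the \emph{guards} to be the main obstacle. For a variable with $\flow(l_1,x)=1$ the guard reads $v(x)+t\in\varphi(x)$ in both games and there is nothing to prove. But for a variable with $\flow(l_1,x)=0$ the guard in $T(\overline{\game_W})$ is the $t$-independent condition $v(x)\in\varphi(x)$, whereas in $T(\game_U)$, where $x$ is a genuine clock, it would naively read $v(x)+t\in\varphi(x)$; these disagree for $t>0$ and could even block a delay that is legal in $\overline{\game_W}$. The hard part will therefore be to exploit the invariant $v(x)=f_l(x)$: on a frozen variable the guard is the \emph{constant} condition $f_{l_1}(x)\in\varphi(x)$, determined entirely by the source location, so it must be read as a condition on the existence of $e_U$ (the edge is kept precisely when this constant condition holds) rather than as a per-move constraint on $t$. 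With this reading the delay $t$ is unconstrained by frozen variables in both games and the guards coincide. Establishing this equivalence — that the guards of $\game_U$ are effectively $t$-independent on frozen variables and agree with those of $\overline{\game_W}$ — is the delicate step; the reset bookkeeping above is then routine.
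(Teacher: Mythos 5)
Your proposal takes a fundamentally different route from the paper's: the paper's entire proof of this lemma is the single sentence ``$\beta_2$ is the identity function hence it is clear that it is a bisimulation,'' whereas you treat the statement as genuinely non-trivial, requiring a reachability invariant plus a reconciliation of guards. You are right to do so, and in fact your analysis exposes a gap in the \emph{paper's} argument rather than in yours. Identity of configurations does not make the claim trivial, because the two transition systems do not have the same transitions: in $T(\overline{\game_W})$ a variable $x$ with $\flow(l,x)=0$ does not advance during a delay, while in $T(\game_U)$ every variable is a clock advancing at rate $1$, and the construction of $\edge_U$ keeps the guard $\varphi$ of $\overline{e}$ \emph{unchanged} (only the reset is modified, Eq.~\eqref{overline.post}). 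Your reset bookkeeping, including the invariant $v(x)=f_l(x)$ on frozen variables (with the reading $f_l=f_{l_2}$ in Eq.~\eqref{overline.post}), is exactly what is needed to make the target configurations agree, and the guard mismatch you single out is precisely where the one-line argument breaks. Concretely: take an initial location $l_0$ with $\flow(l_0,x)=0$, $\flow(l_0,y)=1$ and a single edge, with no resets, to a location carrying a fresh observation, guarded by $\varphi(x)=[0,0]$ and $\varphi(y)=[5,5]$. From the initial configuration, $T(\overline{\game_W})$ admits the move $(a,5)$ (the frozen $x$ stays at $0$), but in $T(\game_U)$ the same guard forces $t=0$ and $t=5$ simultaneously, so no move exists at all. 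Hence $\beta_2$ fails clause 2 of Def.~\ref{def-bisim} --- the two systems are not even trace-equivalent --- and the lemma is false for the construction as literally written.

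Two caveats on your own sketch. First, your resolution of the guard problem (keep the edge $e_U$ exactly when the constant condition $f_{l_1}(x)\in\varphi(x)$ holds, and let frozen variables impose no per-move constraint on $t$) is the right repair, but it is a \emph{modification} of the construction of $\edge_U$, not a ``reading'' of it: as written, $e_U$ exists for every $\overline{e}$ and carries the per-move guard $\varphi$. Moreover, after deleting the conjunct on a frozen variable, $\varphi_{e_U}$ can no longer be a \emph{simple compact} constraint (a ticking, unconstrained clock cannot satisfy any compact interval constraint for all delays), so the repair also requires relaxing the constraint syntax, or shifting guards by the memorized values as the paper later does in the $\game_U\to\game_T$ step, Eq.~\eqref{varphi.t}. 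Second, since your invariant holds only on reachable configurations, the relation you actually prove to be a bisimulation is $\beta_2$ restricted to the reachable part of $Q(\overline{\game_W})$, not the full identity on $Q(\overline{\game_W})$; this restriction is harmless for applying Lemma~\ref{lemme-transfert-strategies}, but it should be stated explicitly.
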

\begin{proof}
$\beta_2$ is the identity function hence it is clear that it is a bisimulation.
\qed\end{proof}

Now let $\beta: Q(\game_W) \to Q(\game_U)$ as $\beta= \beta_1 \circ \beta_2$. And it's inverse $\beta^{-1}: Q(\game_U) \to Q(\game_W)$ as $\beta^{-1}= \beta_2^{-1} \circ \beta_1^{-1}$.
Lemmas \ref{bijec} and \ref{compo} directly imply the following:
\begin{corollary} \label{cor1}
$\beta$ is a bisimulation between $T(\game_W)$ and $T(\game_U)$.
\end{corollary}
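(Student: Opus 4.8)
The plan is to obtain the corollary entirely by composition, since both $\beta_1$ and $\beta_2$ have already been shown to be bisimulations. Recall that, by definition, to prove that $\beta = \beta_1 \circ \beta_2$ is a bisimulation between $T(\game_W)$ and $T(\game_U)$ I must establish two things: that $\beta$ witnesses $T(\game_W) \preceq_s T(\game_U)$, and that $\beta^{-1}$ witnesses $T(\game_U) \preceq_s T(\game_W)$. I would treat these two directions separately and invoke Lemma~\ref{compo} for each.

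For the forward direction, Lemma~\ref{first} provides that $\beta_1$ is a simulation from $T(\game_W)$ to $T(\overline{\game_W})$, and Lemma~\ref{bijec} provides that $\beta_2$ is a simulation from $T(\overline{\game_W})$ to $T(\game_U)$ (being the identity map, $\beta_2$ is trivially a simulation, as is $\beta_2^{-1}$). Applying Lemma~\ref{compo} to the two simulations $\beta_1$ and $\beta_2$ then yields at once that their composite $\beta = \beta_1 \circ \beta_2$ is a simulation from $T(\game_W)$ to $T(\game_U)$. Since each of $\beta_1$ and $\beta_2$ relates the respective pair of initial configurations, so does $\beta$, whence $T(\game_W) \preceq_s T(\game_U)$.

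For the backward direction I would use the companion lemma of the section, which states that $\beta_1^{-1}$ witnesses $T(\overline{\game_W}) \preceq_s T(\game_W)$, together with the fact that $\beta_2^{-1}$ is a simulation from $T(\game_U)$ to $T(\overline{\game_W})$. Because relational composition satisfies $\beta^{-1} = (\beta_1 \circ \beta_2)^{-1} = \beta_2^{-1} \circ \beta_1^{-1}$, a second application of Lemma~\ref{compo}, this time along the chain $T(\game_U) \to T(\overline{\game_W}) \to T(\game_W)$, shows that $\beta^{-1}$ is a simulation from $T(\game_U)$ to $T(\game_W)$. Combining the two directions gives exactly the definition of a bisimulation.

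I do not expect any real obstacle here: the whole content of the corollary is the repackaging of the previously established bisimulations $\beta_1$ and $\beta_2$ through the composition lemma. The only two points requiring a moment's care are the bookkeeping identity $(\beta_1 \circ \beta_2)^{-1} = \beta_2^{-1} \circ \beta_1^{-1}$, which fixes the correct orientation of the composed relation, and the observation that the initial-configuration condition is inherited under composition; both are immediate from the set-theoretic definition of relational composition.
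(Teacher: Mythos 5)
Your proof is correct and follows essentially the same route as the paper, which derives the corollary directly from the composition lemma (Lemma~\ref{compo}) together with the bisimulation lemmas for $\beta_1$ and $\beta_2$. You merely spell out explicitly what the paper leaves implicit: the two separate applications of Lemma~\ref{compo} (one per direction), the identity $(\beta_1 \circ \beta_2)^{-1} = \beta_2^{-1} \circ \beta_1^{-1}$, and the preservation of the initial-configuration condition under composition.
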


\section{From Updatable Timed Game to Timed Game}
\begin{definition} [Timed Turn Based Game]
A timed turn based game $\game_{T}$ of $n$-dimension is an updatable timed game in which clocks can only be reset to 0.
\end{definition}
In the sequel, in an edge $e=(l,a,\varphi_e,r,l')$ of a timed automaton, we consider that $r \subseteq X$ denotes the set of clocks which are reset along this edge.  

\subsection{Transformation}
\label{updatabletotimed}
Given an updatable game $\game_U= (l_0,L_1, L_2, \var, \act, \obs, \edge, \labeling)$ whose semantics is denoted $T(\game_U)= (Q(\game_U),\mvs{\game_U}, q_0, \delta_U)$ we now construct the timed game $\game_T$ as:
\[
\game_T= (l_0^t, L_1^t, L_2^t, \var, \act, \obs,\edge^t,\labeling)
\]
where 
\begin{enumerate}
\item  $l_0^t=(l_0,g_{l_0})$ where $g_{l_0}: \var \to \{0\}$.
    \item $L_1^t= L_1 \times F_t $ where $F_t$ is the set of functions $g: \var \to K_U$, where $K_U$ is the set of constants used in the definition of $\game_U$.
    \item $L_2^t=L_2 \times F_t$. 
    \item $\labeling$ is extended over $L_1^t \cup L_2^t$ as expected.
    \item $\edge^t$ is the set of edges $e^t=(l^t_1,a,\varphi_{e^t},r,l^t_2)$ where $l_1^t=(l_1, g_{l_1})$ and $l_2^t=(l_2, g_{l_2})$ such that there exists an edge $e\in \edge$ with $e=(l_1,a,\varphi_e,\post_e,l_2)$ and for all $x \in \var$:
\begin{align}
        \label{varphi.t}
    \varphi_{e^t} & = \bigwedge_{x\in \var} \varphi_e(x) \text{ with  } \varphi_{e^t}(x) = \varphi(x)- g_{l_1}(x)    \\
    \label{g.t}
     g_{l_2}(x) & = 
        \begin{cases}
            \post(x) & \text{ if } \post(x)\neq \bot \land x\in r \\
            g_{l_1}(x) & \text{ if } \post(x)=\bot
        \end{cases} \\
        \label{r}
     x \in r & \iff \post_e(x)\neq \bot
\end{align}
\end{enumerate}
The  transition system of the constructed timed game $\game_T$ is denoted:
$$T(\game_T)=(Q(\game_T),\mvs{\game_T},q_0^t,\delta_T) $$

We define the relation 
$\gamma_2 \subseteq Q(\game_U) \times Q(\game_T)$ as follows: 
\begin{align*}
\gamma_2 = \{ ((l,v),(l^t,v^t)) \mid & (l,v) \in Q(\game_U),~l^t=(l,g_l) \text{ where }  g_l : X \to K_U   \\
& \text{ and for all } x\in\var~v^t(x)=v(x) - g_l(x) \}
\end{align*}
And its inverse $\gamma_2^{-1} \subseteq Q(\game_T) \times Q(\game_U)$,
\begin{align*}
\gamma_2^{-1}  =\{ \left( (l^t,v^t),(l,v)\right)  \mid & (l^t,v^t)=((l,g_l),v^t) \in Q(\game_T),  \\
 &\forall~x\in\var~ v(x)=v^t(x) + g_l(x) \}
\end{align*}
\subsection{Bisimulation between $T(\game_U)$ and $T(\game_T)$}
We now show that the mapping $\gamma_2$ defined above is an alternating bisimulation.
\begin{lemma}
\label{lemma.gamma2}
   $\gamma_2$ witnesses that $T(\game_U) \preceq_s T(\game_T)$.
\end{lemma}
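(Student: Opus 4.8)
The plan is to verify the three conditions of Definition~\ref{def-bisim} directly for the relation $\gamma_2$, reusing the ``same move'' technique that already carried the two earlier bisimulation proofs: whenever a move $(a,t)$ is available in one game I replay the identical action and delay $(a,t)$ in the other. First I record that the initial configurations are related, since $q_0 = (l_0, v_0)$ with $v_0 \equiv 0$ and $q_0^t = ((l_0, g_{l_0}), v_0^t)$ with $g_{l_0} \equiv 0$ and $v_0^t \equiv 0$ satisfy $v_0^t(x) = v_0(x) - g_{l_0}(x)$ for every $x$. Label preservation (point~1) is immediate because the $\game_T$-location has the form $l^t = (l, g_l)$ and $\labeling$ is extended to $L_1^t \cup L_2^t$ by $\labeling(l, g_l) = \labeling(l)$.

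For point~2, I fix a related pair $\big((l,v),((l,g_l),v^t)\big) \in \gamma_2$ with $l \in L_1$ and a move $(a,t)$ with $(l',v') = \delta_U((l,v),(a,t))$, witnessed by an edge $e = (l, a, \varphi_e, \post_e, l')$. By the construction of Section~\ref{updatabletotimed} there is an edge $e^t = ((l,g_l), a, \varphi_{e^t}, r, (l', g_{l'}))$ in $\edge^t$, with $g_{l'}$, $\varphi_{e^t}$ and $r$ pinned down by~\eqref{g.t}, \eqref{varphi.t} and~\eqref{r}. The crux is the single identity $v^t(x) + t = (v(x) + t) - g_l(x)$: since $v(x) + t \in \varphi_e(x)$ and $\varphi_{e^t}(x) = \varphi_e(x) - g_l(x)$, the guard $\varphi_{e^t}$ is met and $\delta_T(((l,g_l),v^t),(a,t))$ is well defined. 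To close the step I re-establish $v'^t(x) = v'(x) - g_{l'}(x)$ by the two-case analysis keyed on~\eqref{r}: if $\post_e(x) \neq \bot$ then $x \in r$, so $v'^t(x) = 0$ while $g_{l'}(x) = \post_e(x) = v'(x)$, giving $v'(x) - g_{l'}(x) = 0$; if $\post_e(x) = \bot$ then $x \notin r$, so $v'^t(x) = v^t(x) + t$, $v'(x) = v(x)+t$ and $g_{l'}(x) = g_l(x)$, whence $v'(x) - g_{l'}(x) = (v(x)+t) - g_l(x) = v'^t(x)$. Hence the target pair lies in $\gamma_2$.

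Point~3 is the symmetric argument. For a related pair with $l \in L_2$ and a move $(a,t)$ available in $\game_T$ from $((l,g_l),v^t)$, I replay $(a,t)$ in $\game_U$; the witnessing edge $e^t$ originates from some $e \in \edge$, and the reverse identity $v(x) + t = (v^t(x)+t) + g_l(x)$ combined with $\varphi_{e^t}(x) = \varphi_e(x) - g_l(x)$ shows $v(x)+t \in \varphi_e(x)$, so $\delta_U((l,v),(a,t))$ is well defined. The same reset case split then yields $v'^t(x) = v'(x) - g_{l'}(x)$, placing the target pair in $\gamma_2$.

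The only delicate part is the bookkeeping around the offset: the encoding splits the ``true'' clock value $v(x)$ of $\game_U$ into the reset-to-zero clock $v^t(x)$ of $\game_T$ plus a constant offset $g_l(x)$ stored in the location. I expect the main obstacle to be checking that these two mechanisms dovetail exactly on a reset, namely that resetting $x$ to $0$ in $\game_T$ while simultaneously moving the offset from $g_l(x)$ to $\post_e(x)$ (per~\eqref{g.t}) faithfully reproduces the update $v'(x) = \post_e(x)$ of $\game_U$, and that on non-reset transitions the offset is carried unchanged so the accumulated delay is charged to the clock rather than the offset. Equation~\eqref{r} is what makes ``$x$ reset in $\game_U$'' and ``$x$ reset in $\game_T$'' coincide, so the two branches of the case analysis line up; once this alignment is fixed, every remaining verification collapses to the two linear identities above.
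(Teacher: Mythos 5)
Your proposal is correct and follows essentially the same route as the paper's own proof: the ``same move'' replay, the guard translation via $\varphi_{e^t}(x) = \varphi_e(x) - g_l(x)$, and the two-case reset analysis keyed on Identity~\eqref{r} (with $g_{l'}(x) = \post_e(x)$ when $x \in r$ and $g_{l'}(x) = g_l(x)$ otherwise) all match the paper's argument for both directions of the simulation. No gaps.
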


\begin{proof}
Recall that $(l_0^t,v^t_0)=((l_0,g_{l_0}),v_0^t)$ is the initial configuration in $T(\game_T)$ and $(l_0,v_0)$ is the initial configuration in $T(\game_U)$, where $v_0=v_0^t$, hence $((l_0,v_0),(l_0^t,v_0^t))\in \gamma_2$ as required. 

Let $(l,v) \in Q(\game_U)$ and $(l^t,v^t)=((l,g_{l}),v^t) \in Q(\game_T)$ with $((l,v),(l^t,v^t))\in \gamma_2$.
Since the location $l$ is the same, the observation is preserved thus $\labeling(l,v) = \labeling(l^t,v^t)$, so the first point in Def.~\ref{def-bisim} holds.

We now show that $\gamma_2$ satisfies the 2nd point in Def.~\ref{def-bisim}. 
Let $(l,v) \in Q_1(\game_U)$, $m=(a,t) \in \mvs{\game_U}$ and $\delta_U((l,v),m)=(l',v')$.
We then apply the same move in $T(\game_T)$ and prove that $\delta_T((l^t,v^t),m)=({l^t}',{v^t}')$ is well defined, and $\left((l',v'),({l^t}',{v^t}')\right) \in \gamma_2$.

Note that $\delta_U((l,v),m)=(l',v')$ implies that there exists an edge $e=(l,a,\varphi_e,\post_e,l')$ in $\game_U$. We obtain by transformation of Sect.~\ref{updatabletotimed} the  
edge $e^t =((l,g_{l}), a, \varphi_{e^t}(\var),r,(l',g_{l'}))$ in $\game_T$. For all $x\in \var$ we have:
\begin{align*}
&v (x)+ t  \in \varphi_{e}(x)\\
&\implies v (x) + t - g_l (x)  \in \varphi_{e}(x) - g_l(x) \\
&\implies v(x) - g_l(x) +t \in \varphi_{e^t}(x) \text{ cf.~ the Identity \eqref{varphi.t}.}\\
&\implies v^t(x) +t  \in \varphi_{e^t}(x) \text{ because $((l,v),(l^t,v^t))\in \gamma_2$.}
\end{align*}
We obtain $\delta_T(((l,g_{l}),v^t),m)=((l',g_{l'}),{v^t}')$, where ${v^t}'= v^t + t [r:=0]$ is indeed well defined in $T(\game_T)$.
It remains to show that $\left((l',v'),\left((l',g_{l'}),{v^t}'\right)\right)$  are in $\gamma_2$. 
For each $x \in \var$, consider the case $x \notin r$:
        \begin{align*}
            {v^t}'(x) = v^t (x)+t  &= v(x) - g_{l}(x)+ t \\
            &= v(x) +t - g_{l'}(x) \text{ since, by construction, $g_{l}(x)=g_{l'}(x)$ for $x \notin r$.} \\
            &= v'(x)-g_{l'}(x) \text{ because $\post(x)=\bot$ cf. Identity~\eqref{r}}
                \end{align*}
 Now consider $x \in r$:       
                  \begin{align*}
           {v^t}'(x) = 0 &= \post_e(x) - g_{l'}(x) \text{ since, by construction,  $g_{l'}(x)= \post_e(x)$ for $x\in r$ }\\
            &= v'(x)- g_{l'}(x) \text{ since $\post_e(x)\neq \bot$.}
                \end{align*}
We conclude that $\left((l',v'),((l',g_{l'}),{v^t}')\right) \in \gamma_2$.
            
To show that $\gamma_2$ satisfies the third point in Def.~\ref{def-bisim}, let $(l,v) \in Q_2(\game_U)$, 
let $m=(a,t)\in \mvs{\game_T}$ and $\delta_T(((l,g_{l}),v^t),m)=((l',g_{l'}),{v^t}')$. We again apply the same move in $T(\game_U)$ and
prove, first, that $\delta_U((l,v),m)=(l',v')$ is well defined, and then that $\left((l',v'),((l',g_{l'}),{v^t}')\right) \in \gamma_2$.

Note that $\delta_T(((l,g_{l}),v^t),m)=((l',g_{l'}),{v^t}')$ implies that there exists an edge $e^t=((l,g_{l}),a,\varphi_{e^t},r,(l',g_{l'}))$ in $\game_T$. 
$e^t$ is constructed by the transformation in Subsection \ref{updatabletotimed} from the edge $e =(l, a, \varphi_{e},\post_e,l')$ in $\game_U$. For all $x\in \var$ we have:
\begin{align*}
&v^t(x)+ t  \in \varphi_{e^t}(x)\\
&\implies v(x) - g_l(x) + t \in \varphi_{e^t}(x) \text{ because $((l,v),(l^t,v^t))\in \gamma_2$.} \\
&\implies v(x) - g_l(x) + t \in \varphi_{e}(x)- g_l(x) \text{ cf. Identity \ref{varphi.t}.} \\
&\implies v(x) +t  \in \varphi_{e}(x) 
\end{align*}
Hence $\delta_U((l,v),m)=(l',v')$ with $v'(x)=\post_e(v+t)(x)$ for all $x\in \var$ is indeed well defined in $T(\game_U)$. 

Now we prove that $\left((l',v'),((l',g_{l'}),{v^t}')\right) \!\in \!\gamma_2$. For $x\!\in\! \var$ consider the case $x \notin r$:
\begin{align*}
          v'(x) - g_{l'}(x) &=  v(x) +t - g_{l'}(x) \text{ because $x \notin r$ implies $\post(x) = \bot$ by identity \ref{r}.} \\
                &=  v(x) +t - g_{l}(x) \text{ by identity \ref{g.t}, $\post(x) = \bot \implies g_l(x)=g_{l'}(x)$.}\\
                 &=  v(x)^t+t \text{ because $\left((l,v),((l,g_{l}),{v^t}\right) \in \gamma_2$.}\\
                  &={v^t}'(x)
\end{align*}
On the other hand, for the case $x \in r$:
\begin{align*}
 v'(x) - g_{l'}(x) &=  \post_e(x) - g_{l'}(x) 
 \text{ 
  because $\post(x) \neq \bot$.} \\
                &= 0 \text{ because $g_{l'}(x)=\post_e(x)$}\\
                &= {v^t}'(x) \text{ because $x\in r$.}
\end{align*}
This concludes the proof.\qed
\end{proof}
\begin{lemma}
 $\gamma_2^{-1}$  witnesses that $T(\game_T) \preceq_s T(\game_U)$.
\end{lemma}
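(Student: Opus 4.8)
The plan is to mirror the proof of Lemma~\ref{lemma.gamma2} with the two games swapped, exploiting the observation that the two simulation obligations for $\gamma_2^{-1}$ coincide, up to the ownership of locations, with the two obligations already discharged for $\gamma_2$. Concretely, $\gamma_2^{-1}$ relates exactly the pairs $\big(((l,g_l),v^t),(l,v)\big)$ with $v(x)=v^t(x)+g_l(x)$, which is the same condition as $v^t(x)=v(x)-g_l(x)$, and it is functional since $(l,v)$ is uniquely determined by $((l,g_l),v^t)$. Since the transformation of Sect.~\ref{updatabletotimed} builds the edges $e^t$ of $\game_T$ in bijection with the edges $e$ of $\game_U$, the ``same move'' trick is available in both directions on the matching edge; and the guard correspondence $\varphi_{e^t}(x)=\varphi_e(x)-g_l(x)$ of Identity~\eqref{varphi.t} together with the reset correspondence of Identities~\eqref{g.t} and~\eqref{r} are equivalences, so the computations of Lemma~\ref{lemma.gamma2} can be reused verbatim, only read with $\game_T$ playing the role of the first game.

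First I would dispatch the initial and labelling conditions: $g_{l_0}\equiv 0$ gives $v_0=v_0^t$, whence $((l_0^t,v_0^t),(l_0,v_0))\in\gamma_2^{-1}$, and for any related pair the underlying $\game_U$-location $l$ is shared, so $\labeling(l^t,v^t)=\labeling(l,v)$, establishing the first point of Def.~\ref{def-bisim}.

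For the second point of Def.~\ref{def-bisim} (now with $\game_T$ as the first game), I take $(l^t,v^t)=((l,g_l),v^t)\in Q_1(\game_T)$ with $((l^t,v^t),(l,v))\in\gamma_2^{-1}$ and an arbitrary move $m=(a,t)\in\mvs{\game_T}$, and set $\delta_T((l^t,v^t),m)=((l',g_{l'}),{v^t}')$. This transition comes from an edge $e^t=((l,g_l),a,\varphi_{e^t},r,(l',g_{l'}))$, which by construction stems from a unique edge $e=(l,a,\varphi_e,\post_e,l')$ of $\game_U$. Applying the same $m$ in $\game_U$, the guard chain $v^t(x)+t\in\varphi_{e^t}(x)=\varphi_e(x)-g_l(x)$ together with $v(x)=v^t(x)+g_l(x)$ yields $v(x)+t\in\varphi_e(x)$, which is exactly the third-point computation of Lemma~\ref{lemma.gamma2}, so $\delta_U((l,v),m)=(l',v')$ with $v'(x)=\post_e(v+t)(x)$ is well defined. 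To check $\big(((l',g_{l'}),{v^t}'),(l',v')\big)\in\gamma_2^{-1}$, i.e. $v'(x)={v^t}'(x)+g_{l'}(x)$, I split on $x$: if $x\notin r$ then $\post_e(x)=\bot$ by Identity~\eqref{r}, $g_{l'}(x)=g_l(x)$ by Identity~\eqref{g.t}, and ${v^t}'(x)=v^t(x)+t$, so ${v^t}'(x)+g_{l'}(x)=v(x)+t=v'(x)$; if $x\in r$ then ${v^t}'(x)=0$, $g_{l'}(x)=\post_e(x)$ and $v'(x)=\post_e(x)$, so ${v^t}'(x)+g_{l'}(x)=\post_e(x)=v'(x)$.

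For the third point (with $\game_U$ as the second game), I take $(l,v)\in Q_2(\game_U)$, together with its $\gamma_2^{-1}$-preimage $(l^t,v^t)\in Q_2(\game_T)$, a move $m=(a,t)\in\mvs{\game_U}$ with $\delta_U((l,v),m)=(l',v')$, and apply the same $m$ in $\game_T$. Reading the guard equivalence in the opposite direction, $v(x)+t\in\varphi_e(x)$ and $v^t(x)=v(x)-g_l(x)$ give $v^t(x)+t\in\varphi_e(x)-g_l(x)=\varphi_{e^t}(x)$, exactly the second-point computation of Lemma~\ref{lemma.gamma2}, so $\delta_T((l^t,v^t),m)=((l',g_{l'}),{v^t}')$ with ${v^t}'=v^t+t[r:=0]$ is well defined, and the same two-case reset computation gives membership in $\gamma_2^{-1}$. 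I do not expect a genuine obstacle: every equality used is an equivalence already proved in Lemma~\ref{lemma.gamma2}, and the only care needed is bookkeeping the direction of each implication, the $x\in r$ versus $x\notin r$ split, and the fact that the player owning a location is irrelevant to the edge, guard and reset transformations, so swapping the \playerone{} and \playertwo{} roles changes nothing in the computations.
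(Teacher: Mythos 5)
Your proof is correct and takes essentially the same route as the paper's own: the same-move trick on the edge $e^t$/$e$ correspondence, guard transfer via Identity~\eqref{varphi.t}, and the case split $x\in r$ versus $x\notin r$ using Identities~\eqref{g.t} and~\eqref{r}, mirroring Lemma~\ref{lemma.gamma2}. If anything, your implication chains run in the logically cleaner direction (from the given transition to the simulated one), whereas the paper's write-up of the second point states the chain starting from the conclusion $v(x)+t\in\varphi_e(x)$; since every step there is an equivalence, this is only a presentational difference.
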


\begin{proof}
Recall that $(l_0^t,v^t_0)=((l_0,g_{l_0}),v_0^t)$ is the initial configuration in $T(\game_T)$ and $(l_0,v_0)$ is the initial configuration in $T(\game_U)$, where $v_0^t=v_0$ because $g_{l_0}=0$, hence $((l_0^t,v_0^t),(l_0,v_0))\in \gamma_2^{-1}$ as required. 

Let $(l^t,v^t)=((l,g_{l}),v^t) \in Q(\game_T)$ and $(l,v) \in Q(\game_U)$ with $((l^t,v^t),(l,v))\in \gamma_2^{-1}$.
Since the location $l$ is the same, the observation is preserved thus $\labeling(l^t,v^t)=\labeling(l,v)$, therefore the first point in Def.~\ref{def-bisim} holds.

We now show that $\gamma_2^{-1}$ satisfies the second point in Def.~\ref{def-bisim}. Let $(l^t,v^t) \in Q_1(\game_T)$, a configuration of $\playerone$,
let $m=(a,t)\in \mvs{\game_T}$ and $\delta_T(((l,g_{l}),v^t),m)=((l',g_{l'}),{v^t}')$. By applying the same move in $T(\game_U)$, we prove in the first place $\delta_U((l,v),m)=(l',v')$ is well defined, and later on we prove $\left(((l',g_{l'}),{v^t}'),(l',v')\right) \in \gamma_2^{-1}$.

Note that $\delta_T(((l,g_{l}),v^t),m)=((l',g_{l'}),{v^t}')$ implies that there exists an edge $e^t=((l,g_{l}),a,\varphi_{e^t},r,(l',g_{l'}))$ in $\game_T$. $e^t$ is constructed by the transformation of the Sect. \ref{updatabletotimed} from the edge $e =(l, a, \varphi_{e},\post_e,l')$ in $\game_U$. For all $x\in \var$ we have:
\begin{align*}
& v(x)+ t  \in \varphi_{e}(x)\\
&\implies v^t(x) + g_l(x) + t \in \varphi_{e}(x) \text{ because $\left(\left(l^t,v^t\right),(l,v)\right)\in \gamma_2^{-1}$.} \\
&\implies {v^t}'(x) + g_l(x) \in \varphi_{e}(x) \\
&\implies {v^t}'(x) \in \varphi_{e}(x) - g_l(x)  \\
&\implies {v^t}'(x)  \in \varphi_{e^t}(x) \text{ cf.~ Identity \eqref{varphi.t}.}
\end{align*}

Hence $\delta_U((l,v),m)=(l',v')$ with $v'(x)=\post_e(v+t)(x)$ for all $x\in \var$ is well defined in $T(\game_U)$. 

To prove that $\left(\left((l',g_{l'}),{v^t}'\right),(l',v')\right) \in \gamma_2^{-1}$, for $x\in \var$ with $x \notin r$ we have:
\begin{align*}
          {v^t}'(x) + g_{l'}(x) &=  {v^t}'(x) + g_{l}(x)  \intertext{ since $x\notin r$ implies $\post(x) = \bot$ by Identity \eqref{r}, hence $g_l(x)=g_{l'}(x)$.}
         & =  v^t(x) +t + g_{l}(x)  =  v(x)+t \text{ because $\left(((l,g_{l}),{v^t}),(l',v')\right) \in \gamma_2^{-1}$.}\\
          &=v'(x)      
\end{align*}
Now consider the case $x \in r$:
\begin{align*}
          {v^t}'(x) + g_{l'}(x) &=  g_{l'}(x)  \text{ because for all $x\in r,~{v^t}'(x)=0$}\\
          &=  \post_e(x) \text{ cf.~ transformation of Sect.~\ref{updatabletotimed}.} \\ 
          &=v'(x) \text{ because $\post_e(x) \neq \bot$ by Identity \eqref{r}.}     
\end{align*}
We conclude that $\left(\left((l',g_{l'}),{v^t}'\right),(l',v')\right) \in \gamma_2^{-1}$.

We now show that $\gamma_2^{-1}$ satisfies the third point in Def.~\ref{def-bisim}. 
Let $\left((l,g_{l}),v^t\right) \in Q_2(\game_T)$, a configuration of \playertwo, let $m=(a,t)\in \mvs{T(\game_U)}$ and $\delta_U((l,v),m)=(l',v')$. 
Again we take the same move $m$ in $\mvs{T(\game_T)}$ and show that $\delta_T(((l,g_{l}),v^t),m)=((l',g_{l'}),{v^t}')$ is well defined and that $\left(\left((l',g_{l'}),{v^t}'\right),(l',v')\right) \in \gamma_2^{-1}$.

Note that $\delta_U((l,v),m)=(l',v')$ implies the edge $e=(l,a,\varphi_e,\post_e,l')$ must exist in $\game_U$. We obtain by transformation \ref{updatabletotimed} the 
edge $e^t =((l,g_{l}), a, \varphi_{e^t}(\var),r,(l',g_{l'}))$ in $\game_T$. For all $x\in \var$ we have:
\begin{align*}
v (x)+ t  \in \varphi_{e}(x)
&\implies v^t (x) + t + g_l (x)  \in \varphi_{e}(x) \text{ because $((l^t,v^t),(l,v))\in \gamma_2^{-1}$.} \\
&\implies v^t(x) +t \in \varphi_{e}(x) - g_l(x)\\ 
&\implies v^t(x) +t  \in \varphi_{e^t}(x) \text{ cf.~ Identity \eqref{varphi.t}.}
\end{align*}
We obtain $\delta_T(((l,g_{l}),v^t),m)=((l',g_{l'}),{v^t}')$, where ${v^t}'= v^t + t [r:=0]$ is indeed well defined in $T(\game_T)$.
Let's show that $\left(\left((l',g_{l'}),{v^t}'\right),(l',v')\right) \in \gamma_2$. For all $x \in \var$, consider the case $x \notin r$:
\begin{align*}
        v'(x)&=v(x)+t \text{ because $x \notin r \implies \post_e(x)\neq\bot$ }  \\
        &= v^t(x)+g_l(x)+t \text{ because $((l^t,v^t),(l,v))\in \gamma_2^{-1}$.}\\
        &={v^t}'(x)+g_l(x) \text{ because $x \notin r$.} \\
        &={v^t}'(x)+g_{l'}(x) \text{ by the construction of Sect.~\ref{updatabletotimed} $g_{l}(x)=g_{l'}(x)$ for $x \notin r$.} 
        \end{align*}
 Now consider $x \in r$:       
                  \begin{align*}
          v'(x)= \post_e(x)&= g_{l'}(x) \text{ since, by construction, $g_{l'}(x)= \post_e(x)$ for $x\in r$ }\\
            &= {v^t}'(x)+ g_{l'}(x) \text{ because ${v^t}'(x)=0$ for $x \in r$.}
                \end{align*}

This concludes the proof. 
\qed\end{proof}
\section{Conclusion}
We have shown that each initialized singular game is bisimilar with a timed game.
We recall the different constructions in the figure below.
The  decidability of the synthesis problem for LTL specifications and initialized singular games can then be seen as a corollary of this result.
\begin{center}
\begin{tikzpicture}[->,>=stealth',shorten >=0.5pt,auto,node distance=1.7cm,semithick]

\tikzstyle{place}=[rectangle,thick,draw=black,fill=white,minimum size=10mm,text=black]
  \tikzstyle{every state}=[circle,thick,draw=red!75,fill=red!20,minimum size=20mm,]

\node[place]         (B)                {Initialized singular game $T(\game_S)$
                                                };
                                        
 \node[place]         (F)     [ below of=B] {Initialized stopwatch game $T(\game_W)$
                                                };
 \node[place]         (G)     [ below of=F] {Updatable timed game $T(\game_U)$
                                                
                                                };

\node[place]         (H)     [ below of=G] {Timed game $T(\game_T)$
                                                
                                                };
  \path 
   (B) edge   [bend left]     node { $\gamma_1$} (F)
   (F) edge  [bend left]  node { $\gamma_1^{-1}$}     (B)
(F) edge    [bend left]     node {$\beta= \beta_1 \circ \beta_2$ } (G)
(G) edge   [bend left]   node {$\beta^{-1}$ }   (F)
(G) edge   [bend left]      node {$\gamma_2$ } (H)
(H) edge   [bend left]   node {$\gamma_2^{-1}$ }   (G)
         ;   
\end{tikzpicture}
\end{center}

We plan to extend this result by carefully adapting the results for larger subclasses of initialized rectangular hybrid automata from \cite{HENZINGER199894}
and hence weakening the constraint that the derivative of each variable must be piecewise constant. 

\bibliographystyle{splncs04}
\bibliography{ref}

\end{document}